\newcommand{\fzeta} {\ensuremath{\zeta^{<\omega}}\xspace}
\newcommand{\nzeta} {\ensuremath{\lambda}\xspace}
\newcommand{\ezeta} {\ensuremath{\zeta_{\nzeta}}}
\newcommand{\szeta} {\ensuremath{\zeta_{\nzeta}^{<\omega}}}
\newcommand{\ext}[1] {\ensuremath{\zeta_{#1}}\xspace}
\newcommand{\sext}[1] {\ensuremath{\zeta_{#1}^{<\omega}}\xspace}
\newcommand{\wrt} {\textit{w.r.t.}\xspace}
\newcommand{\reductions} {\ensuremath{\Gamma}}
\newcommand{\derivations}[1] {\allDeriv{#1}}
\newcommand{\allDeriv}[1] {\ensuremath{\Gamma_{#1}^+}}
\newcommand{\allDerivStar}[1] {\ensuremath{\Gamma^{*}_{#1}}}
\newcommand{\allDerivInf}[1] {\ensuremath{\Gamma^{\omega}_{#1}}}
\newcommand{\elan}    {\textsf{ELAN}\xspace}
\newcommand{\maude}   {\textsf{Maude}\xspace}
\newcommand{\stratego}    {\textsf{Stratego}\xspace}
\newcommand{\strafunski}    {\textsf{Strafunski}\xspace}
\newcommand{\tom}     {\textsf{TOM}\xspace}
\newcommand{\LCF}     {\textsf{LCF}\xspace}
\newcommand{\app}       {\mathrel{\!\!}}
\newcommand{\ars}{abstract reduction system\xspace}
\newcommand{\aos}{\ensuremath{\ca A=(\ca O,\ca L,\reductions)}}
\newcommand{\flechup}[1]{\xrightarrow{#1}}
\newcommand{\caA}{{\cal A}}
\newcommand{\implique}{\ensuremath{\Rightarrow}}
\newcommand{\cx}[1]{\ensuremath{\left[#1\right]\vspace{1pt}}}
\newcommand{\Ee}{{\cal O}}
\newcommand{\vide}{\Lambda}
\newcommand{\cobj}{\ensuremath{\Ee^{[\ca A]}}}
\newcommand{\ca}[1]{\ensuremath{\mathcal{#1}}}
\renewcommand{\emptyset}{\ensuremath{\varnothing}}
\newcommand{\sema}[1]{\ensuremath{\left\llbracket #1 \right\rrbracket}}
\newtheorem{definition}{Definition}
\newtheorem{example}{Example}
\newtheorem{proposition}{Proposition}
\newtheorem{lemma}{Lemma}
\newcommand{\proof}[1]{\begin{description}
            \item[Proof:] #1 \hfill\(\Box\) \end{description}}
\newcommand{\carone}{\ensuremath{\mathsf{car}_1}}
\newcommand{\cartwo}{\ensuremath{\mathsf{car}_2}}
\newcommand{\signalone}{\ensuremath{\mathsf{signal}_1}}
\newcommand{\signaltwo}{\ensuremath{\mathsf{signal}_2}}
\newcommand{\crossone}{\ensuremath{\mathsf{cross}_1}}
\newcommand{\crosstwo}{\ensuremath{\mathsf{cross}_2}}
\newcommand{\sem}[1]{\ensuremath{\zeta_{#1}}} 
\begin{document}

\title{Extensional and Intensional Strategies}

\author{Tony Bourdier
\hfill Horatiu Cirstea
\institute{INRIA Nancy Grand-Est
\hfill Nancy Universit\'{e} \hfill LORIA \\ BP 239, 
54506 Vandoeuvre-les-Nancy Cedex, France }
\and Daniel J.\  Dougherty
\institute{
\and Worcester Polytechnic Institute\\ Worcester, MA, 01609, USA }
\and H\'el\`ene Kirchner
\institute{
\and INRIA Bordeaux Sud-Ouest\\
351, Cours de la Lib\'eration, 33405 Talence, France}
}

\def\titlerunning{Extensional and Intensional Strategies}
\def\authorrunning{T. Bourdier, H. Cirstea, D. Dougherty \& H. Kirchner}

\maketitle

\begin{abstract}

  This paper is a contribution to the theoretical
  foundations of strategies.  We first present a general definition of
  abstract strategies which is {\em extensional} in the sense that a
  strategy is defined explicitly as a set of derivations of an abstract
  reduction system.  We then move to a more {\em intensional} definition
  supporting the abstract view but more operational in the sense that it
  describes a means for determining such a set.  
  We characterize the class of extensional strategies
  that can be defined intensionally. We also give some hints
  towards a logical characterization of intensional strategies and
  propose a few challenging perspectives.

\end{abstract}

\section{Introduction}

Rule-based reasoning is present in many domains of computer science:
in formal specifications, rewriting allows prototyping
specifications; in theorem proving, rewrite rules are used for dealing with equality,
simplifying the formulas and pruning the search space; in programming
languages, rules can be explicit like in PROLOG, OBJ or ML, or hidden
in the operational semantics; expert systems use rules to describe
actions to perform; in constraint logic programming, solvers are
described via rules transforming constraint systems.  XML document
transformations, access-control policies or bio-chemical reactions are
a few examples of application domains.

Nevertheless, deterministic rule-based computations or deductions are
often not sufficient to capture every computation or proof
development. A formal mechanism is needed, for instance, to
sequentialize the search for different solutions, to check context
conditions, to request user input to instantiate variables, to process
sub-goals in a particular order, etc. This is the place where the
notion of strategy comes in.

Strategies have been introduced in functional programming (Lisp, ML,
Haskell, OBJ), logic programming (PROLOG, CHR), logic-functional
languages (Curry, Toy) and constraint programming (CLP).  Reduction
strategies in term rewriting study which expressions should be
selected for evaluation and which rules should be applied. These
choices usually increase efficiency of evaluation but may affect
fundamental properties of computations such as confluence or
(non-)termination.  Programming languages like \elan, \maude and
\stratego allow for the explicit definition of the evaluation
strategy, whereas languages like Clean, Curry, and Haskell allow for
its modification.
In theorem proving environments, including automated theorem provers,
proof checkers, and logical frameworks, strategies (also called
tactics or tacticals in some contexts) are used for various purposes,
such as proof search and proof planning, restriction of search spaces,
specification of control components, combination of different proof
techniques and computation paradigms, or meta-level programming in
reasoning systems.
Strategies are increasingly useful as a component in systems with
computation and deduction explicitly
interacting~\cite{TPM-DHK-JAR-03,EMOMV07}.  The complementarity
between deduction and computation, as emphasized in particular in
deduction modulo~\cite{TPM-DHK-JAR-03}, allows us to now envision a
completely new generation of proof assistants where customized
deductions are performed modulo appropriate and user definable
computations~\cite{Brauner:2007fk,Brauner:2007uq}.

In the fields of system design and verification,
 \emph{games} ---most often two-person path-forming games over graphs---
 have emerged as a key tool.  Such games have been studied since the
 first half of 20th century in descriptive set theory \cite{Kechris95},
 and 
 they have been adapted and generalized for applications in formal
 verification; introductions can be found in
 \cite{dagstuhl2001,BerwangerGK03,Walukiewicz04}.  Related applications
 appear in logic \cite{Blass92}, %
 planning \cite{PistoreV03}, %
 and synthesis \cite{ArnoldVW03}.
At first glance the coincidence of the term ``strategy'' in the domains
of rewriting and games appears to be no more than a pun.  But it turns
out to be surprisingly fruitful to explore the connection and to be
guided in the study of the foundations of strategies by some of the
insights in the literature of games.  This point of view is further developed 
in this paper.

In order to illustrate the variety of situations involving the notion
of strategy, let us give a few examples.
In~\cite{KlopOostromRaamsdonk}, the authors describe a
non-deterministic strategy for higher-order rewriting: it amounts to
choose an outermost redex and skip redexes that do no contribute to
the normal form because they are in a cycle.  In proof assistants like
Coq \cite{Del00}, tactics are used to describe proof search
strategies. For instance, the {\tt orelse} tactic in \LCF is defined
as follows: given two tactics A and B, apply tactic B only if the
application of tactic A either failed or did not modify the proof.  In
constraint solving, intricate strategies have been defined by
combining choice points setting, forward or backward checking,
enumeration strategy of values in finite domains, and selection of
solutions. Examples can be found in~\cite{Castro:FI:1998}.  In game
theory, the notion of strategy is crucial to determine the next move
for each player \cite{ArnoldVW03}. In~\cite{Dougherty08}, the idea is
applied to computation of normal forms: two players $W$ and $B$ with
respective rules $R_W$ and $R_B$ play a game by rewriting terms in
the combined signature and we want to know if there exists a winning
strategy to reach the normal form.

Strategies are thus ubiquitous in automated deduction and reasoning
systems, yet only recently have they been studied in their own
right. This paper is a contribution to the concept definition and its
theoretical foundation. We try to reconcile different points of view
and to compare their expressive power.

In Section~\ref{ARS}, we recall the definitions related to abstract
reduction systems, before giving in Section~\ref{abstractStrategies}
the definition of an abstract strategy as a subset of reduction
sequences called derivations.  
In Section~\ref{se:IntensionalStrategies}, we give an intensional definition
of strategies compliant with the abstract view but more operational in
the sense that it describes a means of building a subset of
derivations by associating to a reduction-in-progress the possible 
next steps in the reduction sequence. Then intensional strategies with
memory are defined. This gives the expressive power to build next step
with the knowledge of past steps in the derivation. 
Section~\ref{se:express} explores which abstract strategies can be actually expressed by intensional ones.
In order to increase the expressive power of intensional strategies,
we eventually propose in Section~\ref{sec:log-int-strat} to define intensional strategies with 
an accepting condition.
Further research questions are presented in Section~\ref{Conclusion}.

\section{Abstract reduction systems}
\label{ARS}

When abstracting the notion of strategy, one important preliminary
remark is that we need to start from an appropriate notion of
\textit{abstract reduction system} (ARS) based on the notion of
oriented labeled graph instead of binary relation. This is due to the
fact that, speaking of derivations, we need to make a difference
between ``being in relation'' and ``being connected''. Typically
modeling ARS as relations as in~\cite{BaaderNipkowREW-98} allows us to
say that, \textit{e.g.}, $a$ and $b$ are in relation but not that
there may be several different ways to derive $b$ from
$a$. Consequently, we need to use a more expressive approach, similar
to the one proposed in~\cite{TereseBook2003,TereseChapter8-2003} based on a
notion of oriented graph.  Our definition is similar to the one given
in~\cite{KKK08} with the slight difference that we make more precise
the definition of steps and labels.
Similarly to the step-based
definition of an \ars\ of~\cite{TereseBook2003}, this definition that
identifies the reduction steps avoids the so-called \emph{syntactic
accidents}~\cite{Levy78},  related to different but indistinguishable
derivations. 

\begin{definition}[Abstract reduction system]
\label{def:ARS}
Given a countable set of objects $\ca O$ and a countable set of labels
$\ca L$ mutually disjoint, an \emph{\ars} (ARS) is a triple $(\ca
O,\ca L,\reductions)$ such that $\reductions$ is a functional relation from
$\ca O \times \ca L$ to $\ca O$: formally, $\reductions \subseteq \ca O
\times \ca L\times \ca O$ and $(a,\phi,b_1) \in \reductions$ and
$(a,\phi,b_2)\in \reductions$ implies $b_1=b_2$.

The tuples $(a,\phi,b) \in \reductions$ are called \emph{steps} and are
often denoted by $a \flechup{\phi} b$. We say that $a$ is the
\emph{source} of $a \flechup{\phi} b$, $b$ its \emph{target} and
$\phi$ its \emph{label}. Moreover, two steps are \emph{composable} if the target
of the former is the source of the latter.

\end{definition}

Like for graphs and labeled transition systems, in order to support
intuition, we often use the obvious graphical representation to
denote the corresponding ARS.

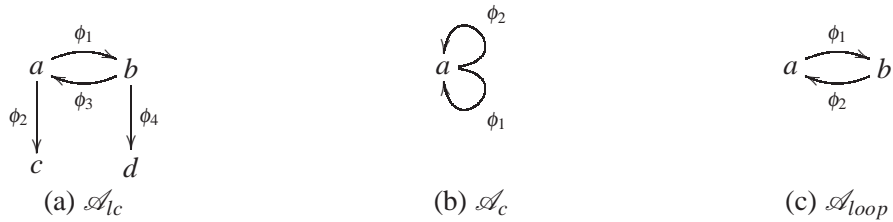
\begin{figure}[!th]
\begin{center}
\begin{tabular}{c@{~~~~~~~~~~~~~~~~~~~~~~~~~~~~~~~~~~~~}c@{~~~~~~~~~~~~~~~~~~~~~~~~~~~~~~~~~~~~}c}
\xymatrix{
a  \ar@/^/[r]^{\phi_1} \ar[d]_{\phi_2} &  b \ar@/^/[l]^{\phi_3} \ar[d]^{\phi_4} \\
c        & d
}
&
\xymatrix{ a \ar@(r,d)[]^{\phi_1}\ar@(r,u)[]_{\phi_2}}
&
\xymatrix{
a  \ar@/^/[r]^{\phi_1} &  b \ar@/^/[l]^{\phi_2} 
}
\\
(a) ${\cal A}_{lc}$  & (b)  ${\cal A}_{c}$  & (c) ${\cal A}_{loop}$
\end{tabular}
\end{center}
\caption{Graphical representation of abstract reduction systems}
\label{fig:exARS}
\end{figure}

\begin{example}[Abstract reduction systems] \label{ex:simples}
The \ars\ 
$${\cal
  A}_{lc}=(\{a,b,c,d\},\{\phi_1,\phi_2,\phi_3,\phi_4\},\{(a,\phi_1,b),(a,\phi_2,c),(b,\phi_3,a),(b,\phi_4,d)\})$$
is depicted in Figure~\ref{fig:exARS}(a).

The interest of using the above definition of abstract reduction
systems instead of the classical one based on binary relations is
illustrated by the \ars\ 
$${\cal A}_c=(\{a\},\{\phi_1,\phi_2\},\{(a,\phi_1,a),(a,\phi_2,a)\})$$
depicted in Figure~\ref{fig:exARS}(b).

The ``looping'' \ars\ 
$${\cal A}_{loop}=(\{a,b\},\{\phi_1,\phi_2\},\{(a,\phi_1,b),(b,\phi_2,a)\})$$
is depicted in Figure~\ref{fig:exARS}(c).

The \ars
$${\cal A}_{nat}=(\{a_i ~|~ i \in \mathbb N \}, \{\phi_i ~|~  i \in \mathbb N \},\{(a_i,\phi_i,a_{i+1}) ~|~  i \in \mathbb N\})$$
 has   infinite (but countable) sets of objects and labels, and its
 relation can be  depicted by: 
$$a_0 \flechup{\phi_0} a_1   \flechup{\phi_1} \ldots a_i \flechup{\phi_i} a_{i+1}  \ldots$$

Another  \ars\ with an infinite (but countable) set of derivations
{starting from a same source} is
$${\cal A}_{ex}=(\{a_i^j ~|~  i,j \in \mathbb N \}, \{\phi_i^j ~|~ 0\leq i <j \},
\{(a_0^0,\phi_0^j,a_1^j) ~|~  1\leq j \} \cup \{ (a_i^j,\phi_i^j,a_{i+1}^j) ~|~   1\leq i <j \})$$
{whose relation can be depicted by:}\\
$$
 \xymatrix{ 
 		 		&& a_1^1	\\
 		 		&& a_1^2	\ar[r]^{\phi_1^2} & a_2^2	\\
 a^0_0\ar[uurr]^{\phi_0^1}
 			\ar[urr]^{\phi_0^2}
			\ar[drr]^{\phi_0^n}
 			\ar[ddrr]_{\ldots}	
 				&&	~\ldots\\
 				&& a_1^n \ar[r]^{\phi_1^n}  & ~~\ldots~~ \ar[r]^{\phi_{n-2}^n} & a_{n-1}^n \ar[r]^{\phi_{n-1}^n}& a_{n}^n\\
 				&& ~~~~\ldots
 	}
 $$
\end{example}

The condition that $\reductions$ is a functional relation implies that
an ARS is a particular case of a labeled transition system. As we will
see in what follows, labels characterize the way an object is
transformed: given an object and a
transformation, there is at most one object resulting from
the transformation applied to this particular object.  So Definition~\ref{def:ARS} does not authorize
for instance to have $\phi_1=\phi_2=\phi$ in ${\cal A}_{lc}$
of Example~\ref{fig:exARS}.

The next definitions can be seen as a renaming of usual ones in graph
theory. Their interest is to allow us to define uniformly derivations
and strategies in different contexts.

\begin{definition}[Derivation]
  Given an \ars\ $\ca A=(\ca O,\ca L,\reductions)$ we call
  \emph{derivation} over $\ca A$ any sequence $\pi$ of steps
  $\big((t_i,\phi_i,t_{i+1})\big)_{i\in \Im}$ for any right-open
  interval $\Im\subseteq \mathbb N$ starting from $0$.  If $\Im$
  contains at least one element, then:
  \begin{itemize}
  \item $Dom(\pi)=t_0$ is called the \emph{source} (or domain) of
    $\pi$,
  \item $l(\pi) = (\phi_i)_{i\in \Im}$ is a sequence called
    \emph{label} of $\pi$,
  \item For any non empty (possibly right-open) subinterval $\Im' \subseteq \Im$,
    $\pi'=\big((t_i,\phi_i,t_{i+1})\big)_{i\in \Im'}$ is a factor of
    $\pi$. If $\Im'$ contains $0$, then $\pi'$ is a prefix of $\pi$.
  If $\Im'\neq \Im$, $\pi'$ is a strict factor (or prefix) of $\pi$.
\end{itemize}
If $\Im$ is finite, it has a smallest upper bound denoted by
$n_\Im$ or simply $n$ and then:
\begin{itemize}
\item $Im(\pi) = t_n$ is called the \emph{target} (or image) of $\pi$,
\item $|\,\pi\,| = card(\Im)$ is called the \emph{length} of $\pi$,
\end{itemize}
In such a case, $\pi$ is said to be finite and is also denoted by
$\pi=(t_0,l(\pi),t_n)$ or $t_0\flechup{l(\pi)}t_n$.  The
sequence containing no step is called \emph{empty derivation} and is
denoted by $\Lambda_\Gamma$ and by convention
$l(\Lambda_\Gamma)=\epsilon$ where $\epsilon$ is the empty sequence of 
elements of $\ca L$.
\end{definition}

Note that a step may be considered as a derivation of length 1.

We denote by \allDerivInf{\ca A} (resp. \allDerivStar{\ca A},
resp. \allDeriv{\ca A}) the set of all derivations (resp. finite,
resp. non-empty and finite) over $\ca A$.

\begin{definition}[Composable derivations]
  Two derivations of the form $\pi_1=\big((t_i,\phi_i,t_{i+1})\big)_{i\in \Im_1}$
  and $\pi_2=\big((u_i,\phi_i,u_{i+1})\big)_{i\in \Im_2}$ over a same
  \ars\ $\ca A=(\ca O,\ca L,\reductions)$ are \emph{composable}
  iff either one of the derivations is empty or $\Im_1$ is finite and
  then $t_{n_1} = u_{0}$ where $n_1$ is the smallest upper bound
  of $\Im_1$. In such a case, the composition of $\pi_1$ and
  $\pi_2$ is the unique derivation $\pi = \big( (v_i,\phi_i,v_{i+1})
  \big)_{i\in\Im}$ denoted by $\pi=\pi_1\pi_2$ such that for all $j <
  |\,\pi_1\,|$, $v_{j} = t_{j}$ and for all $j\geq |\,\pi_1\,|$,
  $v_{j} = u_{j-|\,\pi_1\,|}$.
\end{definition}

The composition is associative and has a neutral element which is
$\Lambda_\Gamma$. 
Adopting the product notation, we denote $\prod_{i=1}^n \pi_i =
\pi_1 \ldots \pi_n$, $\pi^n = \prod_{i=1}^n \pi$ and $\pi^\omega =
\prod_{i\in\mathbb N} \pi$.

\begin{example}[Derivations] Following the previous examples, we have:
  \begin{enumerate}
  \item $\allDerivInf{{\ca A}_{lc}}$ contains for instance $\pi_1, \pi_1\pi_3,
    \pi_1\pi_4,\pi_1\pi_3\pi_1, (\pi_1\pi_3)^n, (\pi_1\pi_3)^\omega,
    \ldots$, with $\pi_1 = (a,\phi_1,b)$, $\pi_2 = (a,\phi_2,c)$,
    $\pi_3 = (b,\phi_3,a)$, $\pi_4 = (b,\phi_4,d)$;
  \item $\allDeriv{{\ca A}_{c}} = \left\{ \prod_{i=1}^k \pi_1^{n_i}
    \pi_2^{m_i}~|~k\geq 1,(n_i,m_i)\in\mathbb N^2,(n_i+m_i) > 0\right\}$
    with $\pi_1 = (a,\phi_1,a)$ and $\pi_2 = (a,\phi_2,a)$.
  \item $\allDeriv{{\ca A}_{loop}} = \left\{ (\pi_1\pi_2)^n\pi_1,
    (\pi_2\pi_1)^m\pi_2, (\pi_1\pi_2)^{n+1}, (\pi_2\pi_1)^{m+1} ~|~ (n,m) \in \mathbb N^2\right\}$
    with $\pi_1 = (a,\phi_1,b)$ and $\pi_2 = (b,\phi_2,a)$.
$\allDerivInf{{\ca A}_{loop}}$ contains also $(\pi_1\pi_2)^{\omega}$ and  $(\pi_2\pi_1)^{\omega}$.

  \end{enumerate}
\end{example}

\section{Abstract strategies}
\label{abstractStrategies}

Several different definitions of the notion of strategy have been
given in the  literature. Here is a sampling.

\begin{itemize}
\item
A strategy is a map $F$ from terms to terms such that $t \mapsto F(t)$
(\cite{Barendregt84} in the context of the $\lambda$-calculus,
\cite{FernandezMS05} in the context of explicit substitutions calculi).
\item
A strategy is a sub ARS having the same set of normal forms
(\cite{TereseBook2003,TereseStrategies2003} in the context of abstract reduction systems).
\item
A strategy is a plan for achieving a complex transformation using a
set of rules (\cite{VisserJSC05} in the context of program
transformations).
\item 
A strategy is a set of proof terms in rewriting logic
(\cite{BKKR-IJFCS-2001} in the \elan\ system).
\item 
A strategy is a (higher-order) function (in
\elan~\cite{borovansky02a}, \maude~\cite{MartiOlietMeseguerVerdejo04})
that can apply to other strategies.
\item 
A strategy is a $\rho$-term in the $\rho$-calculus
\cite{rhoCalIGLP-I+II-2001}.
 \item
A strategy is a subset of the set of all rewriting
derivations~\cite{KKK08}, in the context of abstract strategies for
deduction and computation.  This view is further detailed below.
\item
A strategy is a partial function that associates to a
reduction-in-progress, the possible next steps in the reduction
sequence.  
Here, the strategy as a function depends only on the object and the
derivation so far.  This notion of strategy coincides with the
definition of strategy in sequential path-building games,
with applications to planning, verification and
synthesis of concurrent systems~\cite{Dougherty08}.
\end{itemize}

Among these various definitions some of them are {\em extensional} in
the sense that a strategy is taken explicitly as a set of derivations,
while others are {\em intensional} in the sense that they describe a
means for determining such a set.  We focus in this paper on the two
last definitions and explore their implications and their relations.

We use a general definition
slightly different from the one used in~\cite{TereseStrategies2003}.
This approach has already been proposed in~\cite{KKK08} and here we essentially 
improve and detail a few definitions.

\begin{definition}[Abstract Strategy] \label{def:abstractStrategy}
Given an ARS $\ca A$, an \emph{abstract strategy} $\zeta$ over $\ca A$ is a
subset of non-empty derivations of  \allDerivInf{\ca A}.
\end{definition}
A strategy can be a finite or an infinite set of derivations, and the
derivations themselves can be finite or infinite in length.

Let us introduce some terminology that  will be useful later on.

\begin{definition}[Factor-closed, prefix-closed, closed by composition] 
Given an abstract strategy $\zeta$ over an ARS $\ca A$,
\begin{itemize}
\item $\zeta$ is \emph{factor-closed} (resp. \emph{prefix-closed}) iff
  for any derivation $\pi$ in $\zeta$, any factor (resp. prefix) of
  $\pi$ is also in $\zeta$.
\item $\zeta$ is \emph{closed by composition} iff for any two
  composable derivations $\pi, \pi'$ in $\zeta$, their composition
  $\pi \pi'$ is in $\zeta$ too.
 \end{itemize}
\end{definition}

An abstract strategy over an \ars  $\ca A = (\ca O,\ca L,\reductions)$
induces a (partial) function from $\ca O$ to $2^{\ca O}$.
This functional point of view has been already proposed
in~\cite{BKKR-IJFCS-2001}; we just briefly recall it in our formalism.

The \emph{domain} of a strategy $\zeta$ is the set of objects that are
source of a derivation in $\zeta$:
    $$Dom(\zeta) = \bigcup_{\pi \in \zeta} Dom(\pi)$$
The application of a strategy is defined (only) on the objects of its
domain. The application of a strategy $\zeta$ on $a\in Dom(\zeta)$ is
denoted $\app{\zeta}{a}$ and is defined as the set of all objects that
can be reached from $a$ using a finite derivation in $\zeta$:
    $$\app{\zeta}{a}
    = \{Im(\pi) ~|~ \pi \in \zeta, \pi \mbox{ finite and } Dom(\pi)=a \}$$
    If $a \not\in Dom(\zeta)$ we say that $\zeta$ \emph{fails} on $a$
   ($\zeta$ contains no derivation of source $a$).

    If $a \in Dom(\zeta)$ and $\app{\zeta}{a}=\emptyset$, we say that
    the strategy $\zeta$ is \emph{indeterminate} on $a$.  In fact,
    $\zeta$ is indeterminate on $a$ if and only if $a \in Dom(\zeta)$
    and $\zeta$ contains no
    finite derivation starting from $a$.

\begin{example}[Strategies]\label{ex:strategies}
Let us consider the \ars\ ${\cal A}_{lc}$ of Example~\ref{ex:simples} and
define the following strategies:
 \begin{enumerate}
  \item The strategy $\zeta_{u}$ $= \allDerivInf{{\ca A}_{lc}}$, also
    called the \emph{Universal} strategy~\cite{KKK08} (\wrt\ $\ca A_{lc}$), contains
    all the derivations of $\ca A_{lc}$. We have $\app{\zeta_{u}}{a} =\,\,
    \app{\zeta_{u}}{b} = \{a,b,c,d\}$ and $\zeta_{u}$ fails on $c$ and
    $d$.
 \item The strategy $\zeta_{f} = \emptyset$, also called \emph{Fail},
   contains no derivation and thus fails on any $x\in\{a,b,c,d\}$.
  \item No matter which derivation of the strategy $\zeta_{c} = \left\{\left(a\flechup{\phi_1
    \phi_3} a\right)^n  a \flechup{\phi_2} c~|~n\geq 0\right\}$
    is considered, the object $a$ eventually reduces to $c$: $\app{\zeta_{c}}{a}
    = \{c\}$. $\zeta_{c}$ fails on $b$, $c$ and $d$.
  \item The strategy $\zeta_{\omega} = {\left\{\left(a\flechup{\phi_1
      \phi_3} a\right)^\omega \right\}}$ is indeterminate on $a$ and
    fails on $b$, $c$ and $d$.
  \end{enumerate}
 The strategies $\zeta_{u}$ and $\zeta_{f}$ are prefix closed while
 $\zeta_{c}$ and $\zeta_{\omega}$ are not.\\
The so-called \emph{Universal} and \emph{Fail} strategies introduced
in Example~\ref{ex:strategies} can be obviously defined over any
\ars. 
\end{example}

There is a natural topology on the space of derivations in an
\ars. The set of (finite and infinite) derivations is essentially the
\emph{Kahn domain}\cite{Escardo04} over the set of labels.
 The basic open sets in this topology
are the ``intervals'', the sets $B_{\pi'} = \{\pi \mid \pi' \mbox{ is a
  prefix of } \pi\}$ as $\pi'$ ranges over the finite derivations.
Under this topology we have the following characterization of the
\emph{closed} sets.

\begin{definition}\label{closed}
If $\zeta$ is a set of derivations, a \emph{limit point} of $\zeta$ is a
derivation $\pi$ with the property that 
every finite prefix  $\pi_0$  of $\pi$ is a finite prefix of some derivation in $\zeta$.
 A set $\zeta$ of (finite or infinite) derivations is \emph{closed} if
  it contains all of its limit points.
 Equivalently: for every derivation $\pi$
  not in $\zeta$, there is a finite prefix
$\pi_0$ of $\pi$ such that every extension of 
$\pi_0$ fails to be in $\zeta$. 
\end{definition}

As observed by Alpern and Schneider~\cite{alpern87recognizing}, the
closed sets are precisely
the \emph{safety properties}~\cite{Lamport77} when derivations are
viewed as runs of a system.

\begin{example}
 Let $\zeta =\left\{\left(a \flechup{\phi_1} b \flechup{\phi_3} a\right)^n\, a\flechup{\phi_2} c ~|~n\in \mathbb N\right\}$ 
be an abstract strategy over $\caA_{lc}$. The derivation $\left(a \flechup{\phi_1} b \flechup{\phi_3} a\right)^\omega$
is a limit point of $\zeta$ and does not belong to $\zeta$. Thus, $\zeta$ is  not closed.
\end{example}

\section{Intensional Strategies}
\label{se:IntensionalStrategies}
 
In this section, strategies are considered as a way of constraining and
guiding the steps of a reduction, possibly based on what has happened in
the past.  Under this reading, at any step in a derivation, we should be
able to say whether a contemplated next step obeys the strategy $\zeta$.
This is in contrast to characterizing a set of reductions in
terms of a global property, that may depend on an entire completed
reduction.
We introduce first strategies that do not take into account the
  past; although these memoryless strategies allow us to generate a
  significant number of classical abstract strategies used in term
  rewriting, they are less powerful when it comes to generate
  strategies ubiquitous in game theory. We introduce in what follows
  these two classes of strategies and will state formally in
  Section~\ref{se:express} their expressive power.

\subsection{Memoryless strategies}

Let us first consider in this section a class of strategies that
chooses the next step only regarding the current object (or state).
We follow an established convention~\cite{dagstuhl2001} and call these
strategies \emph{memoryless}.  The following definition formalizes the
choice of the next step using a partial function on objects.

\begin{definition}[Memoryless intensional strategy]
  A \emph{memoryless intensional strategy} over an \ars\ ${\aos}$ is a partial
  function $\nzeta$ from $\ca O$ to $2^{\reductions}$ such that for every
  object $a$, $\nzeta(a) \subseteq \{\pi \mid \pi\in\Gamma, Dom(\pi)=a
  \}$.
\end{definition}

In this definition, $\pi\in\Gamma$ denotes a
reduction step or equivalently a derivation of length $1$. 

A memoryless intensional strategy naturally generates an abstract strategy, as follows.

\begin{definition}[Extension of a memoryless intensional strategy] 
  \label{def:ext-of-int}
  ~~ Let $\nzeta$ be a memoryless intensional strategy over an \ars\
  ${\aos}$.  The \emph{extension} of $\nzeta$ is the abstract
  strategy $\ezeta$ consisting of the following set of derivations:
  \[
  \pi = ((a_i,\phi_i, a_{i+1}))_{i \in\Im} \in \ezeta
  \qquad \text{ iff } \qquad
  \forall j\in\Im, \quad 
  (a_j , \phi_{j} , a_{j+1}) \in \nzeta(a_{j})
  \]

We will sometimes say that the intensional strategy $\nzeta$
  \emph{generates} the abstract strategy $\ezeta$.
\end{definition}

This  extension may obviously contain infinite derivations; in
such a case it also contains all the finite derivations that are
prefixes of the infinite ones. Indeed, it is easy to see from
Definition~\ref{def:ext-of-int} that the extension of an intensional
strategy is closed under taking prefixes.  We show next that the set of
finite derivations generated by an intensional strategy $\nzeta$ can be
constructed inductively from $\nzeta$.

\begin{definition}[Finite support of an abstract strategy] 
 \label{def:support}
Let us call the \emph{finite support} of (any) strategy $\zeta$  the set of
\emph{finite} derivations in $\zeta$ and denote it $\fzeta$.
\end{definition}

\begin{proposition} 
 \label{def:pre-ext-of-int}
Given a memoryless intensional strategy $\nzeta$ over an \ars\ 
of the form ${\aos}$,
the finite support of its extension 
is an  abstract strategy denoted $\szeta$ and inductively defined as follows:
\begin{itemize}
\item $\bigcup_{a\in Dom(\nzeta)}\nzeta(a) \subseteq \szeta$
\item $\forall \pi \in \szeta$ and $\forall\pi' \in \nzeta(Im(\pi)),~\pi \pi' \in \szeta$
\end{itemize} 
\end{proposition}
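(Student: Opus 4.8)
The plan is to show that the inductively-defined set, call it $S$, coincides with $\fzeta$, the finite support of the extension $\ezeta$. Since $\szeta$ is by stipulation the name we give to $\fzeta$, the real content is that $\fzeta$ satisfies the two closure clauses and is the least set doing so. I would prove the two inclusions separately: first, that every derivation in $\fzeta$ belongs to any set $S$ satisfying the two clauses (so $\fzeta \subseteq \szeta$ under the least-fixpoint reading), and second, that $\fzeta$ itself satisfies the two clauses (so $\szeta \subseteq \fzeta$, giving equality).

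First I would handle $\fzeta \subseteq S$ for any $S$ closed under the two clauses, by induction on the length $n$ of a finite derivation $\pi \in \fzeta$. By Definition~\ref{def:ext-of-int}, $\pi = ((a_i,\phi_i,a_{i+1}))_{i\in\Im}$ is in $\ezeta$ iff $(a_j,\phi_j,a_{j+1}) \in \nzeta(a_j)$ for all $j\in\Im$; being finite and non-empty (abstract strategies contain only non-empty derivations), $\pi$ has length $n\geq 1$. For $n=1$, $\pi$ is a single step $(a_0,\phi_0,a_1)\in\nzeta(a_0)$ with $a_0\in Dom(\nzeta)$, so $\pi\in\bigcup_{a\in Dom(\nzeta)}\nzeta(a)\subseteq S$ by the first clause. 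For the inductive step, write $\pi = \pi_0\,\pi'$ where $\pi_0$ is the length-$(n-1)$ prefix and $\pi'$ is the last step. The prefix $\pi_0$ is again a member of $\ezeta$ by the same characterization (the defining condition is inherited by prefixes), hence $\pi_0\in\fzeta$, hence $\pi_0\in S$ by the induction hypothesis. The last step $\pi'$ has source $Im(\pi_0) = a_{n-1}$ and satisfies $(a_{n-1},\phi_{n-1},a_n)\in\nzeta(a_{n-1}) = \nzeta(Im(\pi_0))$, so the second clause gives $\pi_0\,\pi' = \pi\in S$.

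Next I would verify that $\fzeta$ itself satisfies the two clauses, which is essentially immediate from Definition~\ref{def:ext-of-int}. For the first clause: any $\pi'\in\nzeta(a)$ with $a\in Dom(\nzeta)$ is a single step of source $a$ satisfying the condition vacuously-for-all-$j$-except-$j=0$, so $\pi'\in\ezeta$, and it is finite, so $\pi'\in\fzeta$. For the second clause: if $\pi\in\fzeta$ and $\pi'\in\nzeta(Im(\pi))$, then $\pi$ and $\pi'$ are composable (the target of $\pi$ is the source of $\pi'$), and the composite $\pi\pi'$ satisfies the membership condition of Definition~\ref{def:ext-of-int} at every index: at indices within $\pi$ because $\pi\in\ezeta$, and at the final index (the step $\pi'$) because $\pi'\in\nzeta(Im(\pi))$. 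Hence $\pi\pi'\in\ezeta$, and it is finite, so $\pi\pi'\in\fzeta$. Combining the two parts, $\fzeta$ is the least set satisfying the clauses, which is exactly the assertion that $\szeta$ is well-defined and equals the finite support of $\ezeta$.

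The only mildly delicate point — and the one I would be most careful about — is bookkeeping around the empty derivation and the non-emptiness convention for abstract strategies: the induction base must start at length $1$, not $0$, and one must check that the composition in the second clause never produces the empty derivation (it cannot, since $\pi'$ already has length $\geq 1$). A secondary subtlety is making explicit that the condition "$\forall j\in\Im$, $(a_j,\phi_j,a_{j+1})\in\nzeta(a_j)$" is automatically preserved under passing to prefixes and under the specific composition used here; this is routine but should be stated so the induction goes through cleanly. No topological or cardinality arguments are needed, since we are restricting attention to finite derivations throughout.
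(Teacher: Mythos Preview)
Your proposal is correct and follows essentially the same approach as the paper's proof: both establish the two inclusions by showing that every derivation generated by the inductive clauses is a finite derivation in $\ezeta$, and conversely that every finite derivation in $\ezeta$ is captured by the clauses, the latter by induction on derivation length. Your version is considerably more detailed---the paper's proof is a two-sentence sketch that says ``clearly'' for one direction and merely invokes ``induction on the length'' for the other---but the underlying argument is the same.
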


\begin{proof}
{
Clearly, the derivations that are computed by this inductive definition are in $\ezeta$ and are finite.
Conversely, by induction on the length of the derivations, any finite derivation in $\ezeta$ is built by one 
of the two inductive cases.
}
\end{proof}

Notice that an intensional strategy $\nzeta$ over an \ars\ ${\aos}$
induces a sub-ARS $\ca B=(\ca O,\ca L, \bigcup_{a\in
  Dom(\nzeta)}\nzeta(a))$ of $\ca A$ and thus, such that $\szeta =
\derivations{\ca B}$ 
and  $\ezeta = \allDerivInf{\ca B}$. Note also that $\ezeta$ is the smallest 
closed strategy containing $\szeta$. 

\begin{example}
  Let us consider the \ars\ ${\cal A}_{lc}$
 of Example~\ref{ex:simples} and define the following strategies:

  \begin{itemize}
  \item The intensional strategy $\nzeta_u$ defined on all objects
    in $\ca O$ such that for any object $a\in{\ca O}$, $\nzeta_u(a) =
    \{\pi \mid \pi\in\Gamma, Dom(\pi)=a \}$ obviously generates the
    Universal strategy $\zeta_u$ (of
    Example~\ref{ex:strategies}). Moreover
    $\sext{\nzeta_u} = \allDeriv{{\ca A}_{lc}}$.

  \item The intensional strategy $\nzeta_f$ defined on no object
    in $\ca O$ generates the Fail strategy $\zeta_f$ (of
    Example~\ref{ex:strategies}). 

  \item
    Given an \ars\ ${\cal A}$, let us consider an order $<$ on the
    labels of ${\cal A}$ and a function ``$max$'' that computes the
    maximal element(s) of a set (the result is a singleton if the order
    is total). The intensional strategy $\nzeta_{gm}$ such that
    $\nzeta_{gm}(a) = \{\pi: a \flechup{\phi} b \mid \phi=max(\{\phi'
    \mid a \flechup{\phi'} b \in \reductions \})\}$ generates a
    ``$Greatmost$'' abstract strategy $\zeta_{gm}$ that, for each of its
    derivations, chooses each time one of the steps with the greatest
    ``weight'' specified by the label.
    \begin{itemize}
    \item If we consider the \ars\ ${\cal A}_{lc}$ with the order
      $\phi_1<\phi_2<\phi_3<\phi_4$, then
     $\ext{\nzeta_{gm}}=\sext{\nzeta_{gm}} =\left\{a\flechup{\phi_2} c~;~ b\flechup{\phi_4}
       d\right\}$.

   \item If we consider the \ars\ ${\cal A}_{lc}$ with the order
      $\phi_1>\phi_2>\phi_3>\phi_4$, then
     $\ext{\nzeta_{gm}}=\left\{\left(a\flechup{\phi_1 \phi_3} a\right)^n a
        \flechup{\phi_1} b~;~ \left(a\flechup{\phi_1 \phi_3}
          a\right)^\omega
	~;~\left(b\flechup{\phi_3 \phi_1} b\right)^n b \flechup{\phi_3}
        a ~;~
	\left(b\flechup{\phi_3 \phi_1} b\right)^\omega~|~n\geq 0\right\}
      $
and\\
      $\sext{\nzeta_{gm}} =\left\{\left(a\flechup{\phi_1 \phi_3} a\right)^n a
        \flechup{\phi_1} b 
        ;
        \left(a\flechup{\phi_1 \phi_3} a\right)^m
        ;
        \left(b\flechup{\phi_3 \phi_1} b\right)^n b \flechup{\phi_3} a 
        ;
        \left(b\flechup{\phi_3 \phi_1} b\right)^m
        ~|~ m>n\geq 0 \right\}
      $
    \end{itemize}
  \end{itemize} If the objects of the \ars\ are terms and the rewriting
  steps are labeled by the redex position, then we can use the prefix
  order on positions and the intensional strategy generates in this case
  the classical \emph{innermost} strategy. When a lexicographic order is
  used, the classical \emph{rightmost-innermost} strategy is obtained.
\end{example}

\subsection{Intensional strategies with memory}

The previous definition of memoryless intensional strategies cannot take into
account the past derivation steps to decide the next possible ones.  For
that, the history of a derivation has to be memorized and available at
each step.  In order to define intensional strategies with memory,
called simply intensional strategies in what follows, let
us first introduce the notion of traced-object where each object
memorizes how it has been reached.

\begin{definition}[Traced-object]
  Given a countable set of objects $\ca O$ and a countable set of labels
  $\ca L$ mutually disjoint,
  a \emph{traced-object} is a pair $\cx{\alpha} a$ where $\alpha$ is a
  sequence of elements of $\ca O \times \ca L$ called \emph{trace} or
  \emph{history} .
\end{definition}

In this definition, we implicitly define a monoid $\big((\ca O \times
\ca L)^*,\odot\big)$ generated by $(\ca O \times \ca L)$ and whose
neutral element is denoted by $\vide$.

\begin{definition}[Traced object compatible with an ARS]
  A traced-object $\cx{\alpha} a$ is \emph{compatible} with $\aos$ iff
  $\alpha = \left((a_i,\phi_i)\right)_{i\in\Im}$ for any right-open
  interval $\Im\subseteq \mathbb N$ starting from $0$ and $a =
  a_{n}$ and for all $i\in\Im$, $(a_i,\phi_i,a_{i+1})\in\Gamma$.
  In such a case, we denote by $\sema{\alpha}$ the derivation
  $\big((a_i,\phi_i,a_{i+1})\big)_{i\in\Im}$ and by $\cobj$ the set of
  traced objects compatible with $\ca A$.
 Moreover, we define an equivalence relation $\sim$ over $\cobj$ as
  follows: $\cx{\alpha} a \sim \cx{\alpha'} a'$ iff $a = a'$. We
  naturally have $\cobj /\! \sim\,= \ca O$.
\end{definition}

We can now refine the definition of intensional strategies taking the
history of objects into account.

\begin{definition}[Intensional strategy (with memory)]
  An \emph{intensional strategy} over an \ars\ ${\aos}$ is a partial
  function $\nzeta$ from $\cobj$ to $2^{\reductions}$ such that for
  every traced object $\cx{\alpha}a$, $\nzeta(\cx{\alpha}a) \subseteq
  \{\pi\in\reductions \mid Dom(\pi)=a \}$.
\end{definition}

As for memoryless intensional strategies, an intensional strategy naturally
generates an abstract strategy, as follows.

\begin{definition}[Extension of an intensional strategy] \label{ext-of-int-mem}
  Let $\nzeta$ be an intensional strategy over an \ars\
  ${\aos}$.  The \emph{extension} of $\nzeta$ is the abstract
  strategy $\ezeta$ consisting of the following set of derivations:
 \[
  \pi = ((a_i,\phi_i, a_{i+1}))_{i\in\Im} \in \ezeta
  \qquad \text{ iff } \qquad
  \forall j\in\Im, \quad 
  (a_j , \phi_{j} , a_{j+1}) \in \nzeta(\cx{\alpha}a_{j})
 \]
  where $\alpha = \left((a_i,\phi_i)\right)_{i\in\Im}$. \quad 
\end{definition}

As before, we can inductively define the finite support of this extension 
as an abstract strategy  $\szeta$ 
containing all finite derivations of $\ezeta$.

\begin{proposition} 
 \label{def:pre-ext-of-int-with-mem}
Given an intensional strategy with memory $\nzeta$ over an
\ars\  of the form ${\aos}$,
the finite support of its extension 
is an  abstract strategy denoted $\szeta$ inductively defined as follows:
\begin{itemize}
 \item $\forall \cx\vide a\in\cobj, \nzeta(\cx\vide a) \subseteq\szeta $,
 \item $\forall \alpha$ \textit{s.t.} $\pi = \sema{\alpha} \in \szeta $ and
$\pi' \in \nzeta \left(\cx{\alpha}Im(\pi)\right),~\pi \pi' \in \szeta $
\end{itemize}
\end{proposition}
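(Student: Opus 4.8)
The proposition is the with-memory analogue of Proposition~\ref{def:pre-ext-of-int}, and the proof should mirror that one closely. The plan is to show a two-way inclusion between the set $\szeta$ computed by the inductive clauses and the set of finite derivations of $\ezeta$ (the finite support, in the sense of Definition~\ref{def:support}). Let me write $S$ for the inductively defined set and recall that $\fzeta$ here denotes the finite derivations of $\ezeta$; we must prove $S = (\ezeta)^{<\omega}$.

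For the inclusion $S \subseteq (\ezeta)^{<\omega}$, I would argue by induction on the derivation tree witnessing membership in $S$, i.e.\ on the number of applications of the two clauses. In the base case, $\pi = \sema{\alpha}$ for some $\cx\vide a$ with $\pi \in \nzeta(\cx\vide a)$; such a $\pi$ is a single step $(a,\phi,b)$, it is finite, and it lies in $\ezeta$ because the defining condition of Definition~\ref{ext-of-int-mem} has to be checked only at $j=0$ with empty preceding trace $\alpha = \vide$, which is exactly the hypothesis $\pi \in \nzeta(\cx\vide a)$. For the inductive step, suppose $\pi \in S$ with $\pi = \sema{\alpha}$ finite and in $\ezeta$, and $\pi' \in \nzeta(\cx{\alpha} \mathit{Im}(\pi))$; then $\pi' = (\mathit{Im}(\pi), \phi, c)$ is a single step, so $\pi\pi'$ is finite, and I need to check the membership condition of Definition~\ref{ext-of-int-mem} for $\pi\pi'$ at every index. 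At the indices internal to $\pi$ this is inherited from $\pi \in \ezeta$ (the prefix of the trace seen at those positions is unchanged by appending $\pi'$), and at the final new index the condition reads $(\mathit{Im}(\pi),\phi,c) \in \nzeta(\cx{\alpha}\mathit{Im}(\pi))$, which is precisely the hypothesis on $\pi'$. Hence $\pi\pi' \in (\ezeta)^{<\omega}$.

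For the converse $(\ezeta)^{<\omega} \subseteq S$, I would induct on the length $|\pi|$ of a finite derivation $\pi \in \ezeta$. If $|\pi| = 1$, then $\pi = (a,\phi,b)$ and $\pi \in \ezeta$ forces $(a,\phi,b) \in \nzeta(\cx\vide a)$, so $\pi \in S$ by the first clause. If $|\pi| = n+1$, write $\pi = \pi_0 \, \pi'$ where $\pi_0$ is the prefix of length $n$ and $\pi'$ is the last step. Since $\ezeta$ is prefix-closed (remarked after Definition~\ref{ext-of-int-mem}, exactly as for the memoryless case), $\pi_0 \in \ezeta$ and is finite, so by the induction hypothesis $\pi_0 \in S$; writing $\pi_0 = \sema{\alpha}$ for the trace $\alpha = ((a_i,\phi_i))_{i<n}$, the membership condition for $\pi$ at its last index gives $\pi' \in \nzeta(\cx{\alpha}\mathit{Im}(\pi_0))$, so the second clause yields $\pi\in S$.

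**The main obstacle.** There is no deep obstacle; the only point deserving care is the bookkeeping of traces: one must observe that appending a step to a derivation does not change the trace seen at any earlier position, so that the per-index conditions of Definition~\ref{ext-of-int-mem} are genuinely local and behave monotonically under composition — this is what makes both the closure under the second clause and the use of prefix-closedness legitimate. Once that observation is made explicit, both inclusions are routine inductions, and I would expect the published proof to be as terse as the one given for Proposition~\ref{def:pre-ext-of-int}.
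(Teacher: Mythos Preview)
Your proposal is correct and follows essentially the same approach as the paper: two inclusions, one immediate from the clauses and one by induction on the length of a finite derivation in $\ezeta$. The paper's proof is exactly as terse as you anticipated---it simply reads ``Similar as in Proposition~\ref{def:pre-ext-of-int}''---so your write-up is a faithful expansion of what is intended.
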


\begin{proof}
{
Similar as in Proposition~\ref{def:pre-ext-of-int}.
}
\end{proof}

\begin{example} \label{with-mem-examples}
  The following examples of strategies cannot be expressed without the
  knowledge of the history and illustrate the interest of traced
  objects.
  \begin{itemize}
  \item
    The intensional strategy that restricts the derivations to be of
    bounded length $k$ makes use of the size of the trace $\alpha$,
    denoted $|\alpha|$:
    $$\nzeta_{ltk}(\cx{\alpha}a) =  \{\pi \mid \pi\in\Gamma, Dom(\pi)=a, |\alpha| < k-1\}$$
  \item If we assume that the reduction steps are colored, for
    instance, in white or black via their labels, then the following
    intensional strategy  generates a strategy whose
  reductions alternate white and black steps:
   $$\nzeta_{WB}(\cx{\left((a_i,\phi_i)\right)_{0\leq i \leq n}}a) 
    = \{\pi: a \flechup{neg(\phi_n)} b \mid \pi\in\Gamma\}$$ with
    $neg(white)=black$ and $neg(black)=white$.  Once again, the
    knowledge of (the color) of the previous step is essential for
    choosing the current one.

  \item The strategy that alternates reductions from a set (of steps)
    $\Gamma_1$ with reductions from a set $\Gamma_2$ can be
    generated by the following intensional strategy:
    \[
    \begin{array}{rcll}
    \nzeta_{\Gamma_1;\Gamma_2}(\cx\vide a) &=&  \{\pi_1 \mid \pi_1 \in \Gamma_1, Dom(\pi_1)=a\} \\
    \nzeta_{\Gamma_1;\Gamma_2}(\cx{\alpha' \odot (u,\phi')} a) &=&    \{\pi_1 \mid \pi_1 \in \Gamma_1, Dom(\pi_1)=a\}&
    \mbox{if $u \flechup{\phi'} a \in \Gamma_2$ }\\
    \nzeta_{\Gamma_1;\Gamma_2}(\cx{\alpha' \odot (u,\phi')} a) &=&    \{\pi_2 \mid \pi_2 \in \Gamma_2, Dom(\pi_2)=a\}&
    \mbox{if $u \flechup{\phi'} a \in \Gamma_1$ }\\
  \end{array}
  \]
  As a concrete example, let $\Gamma_1=\{ a \to b, b \to c \}$ and let
  $\Gamma_2=\{b \to b\}$ (labels are omitted). Then, given the
  reduction $a \to b $, $\nzeta_{\Gamma_1;\Gamma_2}$ yields $\{b\to
  b\}$ as next step, while given the reduction $a \to b \to b$,
  $\nzeta_{\Gamma_1;\Gamma_2}$ yields $\{b\to c\}$ as next step. So
  $\nzeta_{\Gamma_1;\Gamma_2}$ is not memoryless: if all we know is the
  last element of a
  history (for example, $b$) we cannot determine the next step(s).

\item Some standard term-rewriting strategies such as \emph{parallel
    outermost} (when all outermost redexes must be contracted) or
  \emph{Gross-Knuth reduction} are not memoryless when viewed as
  strategies over the reduction system whose steps are single-step
  rewrites.
 \end{itemize}
\end{example}

\section{Expressiveness of intensional strategies}
\label{se:express}

Not every abstract strategy arises as the extension of an intensional
strategy. In this section we give a characterization of such abstract
strategies. 

\subsection{Which abstract strategies can be described by intensional strategies ?}

As a simple example, consider a single derivation $\pi$, and let $\zeta$
be the set of all prefixes of $\pi$ (including $\pi$ itself of course).   Then $\zeta$ 
is the extension of the intensional strategy that maps each
finite prefix of $\pi$ to its next step, if there is one, and otherwise
to the empty set.   Similarly, any $\zeta$ consisting of the
prefix-closures of  a finite set of
derivations is the extension of some intensional $\nzeta$.

On the other hand, 
the following example is instructive.
\begin{example} \label{eventual} 
  Let $\mathcal{A}$ be the \ars consisting of two objects $a$ and $b$ and
  the steps $a \to a$ and $a \to b$ (the labels do not matter):
  	$$\xymatrix{
		a  \ar[r]^{}\ar@(l,u)[]_{} &  b
	}
$$
Let
  $\zeta$ be the set of all the reductions which eventually fire the
  rule $a \to b$.  Then there is no intensional strategy \nzeta such
  that $\sem{\nzeta} = \zeta.$ Suppose on the contrary that we had an
  intensional strategy determining $\zeta$.  Now ask: at a typical stage
  $a \to a \to \dots \to a$ of a reduction, does the step of firing the
  rule $a \to a$ obey this strategy?  Clearly the answer cannot be
  ``no'' at any stage since we would prevent ourselves from going
  further and firing the rule $a \to b$ eventually.  On the other hand,
  if the answer is ``yes'' at every stage, then the infinite reduction $a
  \to a \to a \to a \to \dots$ obeys our strategy at each step.  But
  this reduction is not in $\zeta$!
\end{example}

Example~\ref{eventual} shows that not every set of reductions can be
captured by an intensional notion of strategy.  Note that this is not
a result about computable strategies, or memoryless strategies, or
deterministic strategies.  And the set $\zeta$ is a perfectly
reasonable set of derivations: it can be defined by the rational
expression $ (a \to a)^* (a\to b).$ But there is no \emph{function
  on traced objects} generating precisely the set of derivations in
question.

The intuition behind this example is this: for a given intensional strategy $\nzeta$,
if a derivation $\pi$ \emph{fails} to be in $\sem{\nzeta}$, then
there is a particular step of $\pi$  which fails to obey \nzeta.  This is
the essential aspect of strategies that we suggest distinguishes them
from other ways of defining sets of reductions: their local, finitary
character.

As a preliminary to the following result, we show that 
the family of sets of reductions determined by strategies is closed
under arbitrary intersection.  Indeed, the following stronger
observation is easy
to verify.

\begin{lemma}
  \label{ints}
 Let $\Sigma = \{\nzeta_i \mid i \in \Im \subseteq \mathbb N \}$ be any set of
intensional  strategies and  $\nzeta$  the pointwise intersection of the
  $\nzeta_i$, that is, %
  $\nzeta(\cx{\alpha}a
) =
  \bigcap \{\nzeta_i(\cx{\alpha}a
) \mid i \in \Im \}$. Then $\sem{\nzeta} = \bigcap
  \{\sem{\nzeta_i} \mid i \in \Im\}$.  
\end{lemma}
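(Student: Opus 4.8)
The plan is to prove the two inclusions $\sem{\nzeta} \subseteq \bigcap_i \sem{\nzeta_i}$ and $\bigcap_i \sem{\nzeta_i} \subseteq \sem{\nzeta}$ separately, in both cases simply unfolding Definition~\ref{ext-of-int-mem} and pushing the set-theoretic quantifiers around. The key point is that membership of a derivation $\pi = ((a_i,\phi_i,a_{i+1}))_{i\in\Im}$ in $\sem{\nzeta}$ is characterized by a condition that is \emph{universally quantified over the positions $j\in\Im$} — namely $(a_j,\phi_j,a_{j+1}) \in \nzeta(\cx{\alpha}a_j)$ with $\alpha = ((a_i,\phi_i))_{i<j}$ — and the body of that condition refers to $\nzeta$ only through its value at a single traced object. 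Since $\nzeta$ is defined pointwise as the intersection of the $\nzeta_i$, this body is itself equivalent to a conjunction over $i\in\Im$.

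First I would take the forward inclusion. Suppose $\pi \in \sem{\nzeta}$. Then for every $j\in\Im$ we have $(a_j,\phi_j,a_{j+1}) \in \nzeta(\cx{\alpha}a_j) = \bigcap_i \nzeta_i(\cx{\alpha}a_j)$, hence $(a_j,\phi_j,a_{j+1}) \in \nzeta_i(\cx{\alpha}a_j)$ for every $i\in\Im$. Fixing any particular $i$ and letting $j$ range over $\Im$, this is exactly the condition of Definition~\ref{ext-of-int-mem} witnessing $\pi \in \sem{\nzeta_i}$; since $i$ was arbitrary, $\pi \in \bigcap_i \sem{\nzeta_i}$. For the reverse inclusion, suppose $\pi \in \sem{\nzeta_i}$ for every $i$. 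Fix $j\in\Im$. For each $i$ we get $(a_j,\phi_j,a_{j+1}) \in \nzeta_i(\cx{\alpha}a_j)$, so $(a_j,\phi_j,a_{j+1})$ lies in the intersection over $i$, which is $\nzeta(\cx{\alpha}a_j)$. As $j$ was arbitrary, $\pi \in \sem{\nzeta}$. The only mild subtlety worth a remark is the degenerate case $\Im = \emptyset$: if the index set $\Im$ of strategies is empty, the pointwise intersection is everything (the full partial function sending each traced object to $2^{\reductions}$, or however the convention is read), and $\bigcap$ over an empty family is likewise everything; but since the paper only ever applies this with a genuine set of strategies, I would either note this in passing or restrict to $\Im \neq \emptyset$, matching the ambient conventions.

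There is essentially no obstacle here: the statement is true precisely because the defining condition of an extension is a pointwise, position-by-position constraint, so intersection of strategies commutes with extension. If anything deserves a sentence of care, it is checking that the \emph{same} traced object $\cx{\alpha}a_j$ (with $\alpha$ the history of $\pi$ up to position $j$) is the relevant argument in all of $\nzeta$, $\nzeta_i$, and the pointwise intersection — which is immediate since $\alpha$ depends only on $\pi$ and $j$, not on the strategy. I would present the argument as a short two-paragraph chain of unfoldings rather than anything more elaborate, and remark afterwards that this is the ``stronger observation'' alluded to before the lemma, from which closure of the family $\{\sem{\nzeta} : \nzeta \text{ intensional}\}$ under arbitrary intersection follows at once by taking $\nzeta$ to be the pointwise intersection.
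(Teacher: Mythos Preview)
Your proposal is correct and is precisely the ``easy calculation based on Definition~\ref{ext-of-int-mem}'' that the paper invokes as its entire proof; you have simply spelled out the two inclusions explicitly. The only cosmetic point is that the symbol $\Im$ is overloaded (indexing both the family of strategies and the positions in $\pi$), so in a written version you might rename one of them to avoid confusion.
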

\begin{proof} {An easy calculation based on definition~\ref{ext-of-int-mem}.}
\end{proof}

\begin{proposition} \label{char}
  Let $\zeta$ be a set of (non-empty) derivations. There exists an intensional strategy $\nzeta$ with 
  $\ezeta = \zeta$ iff $\zeta$ is a closed set.
\end{proposition}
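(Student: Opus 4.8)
The plan is to prove the two implications of the ``iff'' separately. Necessity ($\sem\nzeta=\zeta\Rightarrow\zeta$ closed) is the easy direction and just unwinds the \emph{local} character of the extension operator of Definition~\ref{ext-of-int-mem}; sufficiency requires exhibiting, from a closed set $\zeta$, a concrete intensional strategy that generates it.

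For necessity I would argue through the second (``equivalently'') formulation of closedness in Definition~\ref{closed}. Suppose $\pi=((a_i,\phi_i,a_{i+1}))_{i\in\Im}\notin\zeta=\sem\nzeta$. By Definition~\ref{ext-of-int-mem} there is an index $j\in\Im$ with $(a_j,\phi_j,a_{j+1})\notin\nzeta(\cx{\alpha}a_j)$, where $\alpha$ is the trace of $\pi$ accumulated before step $j$. Take $\pi_0$ to be the prefix of $\pi$ of length $j{+}1$ (hence non-empty). Any extension $\sigma$ of $\pi_0$ agrees with $\pi$ on its first $j{+}1$ steps, so in $\sigma$ the traced object reached after $j$ steps is again $\cx{\alpha}a_j$ and the step then taken is again $(a_j,\phi_j,a_{j+1})\notin\nzeta(\cx{\alpha}a_j)$; therefore $\sigma\notin\sem\nzeta=\zeta$. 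Thus every derivation outside $\zeta$ has a finite prefix no extension of which lies in $\zeta$, which is exactly closedness. (This is the ``local, finitary character'' of strategies emphasized just before Lemma~\ref{ints}, and the mechanism whose breakdown is displayed in Example~\ref{eventual}.)

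For sufficiency, given a closed $\zeta$, I would define $\nzeta$ on every compatible traced object $\cx{\alpha}a\in\cobj$ by
\[
\nzeta(\cx{\alpha}a)=\bigl\{\,(a,\phi,b)\in\Gamma \;\bigm|\; \sema{\alpha}\,(a,\phi,b)\text{ is a prefix of some derivation in }\zeta\,\bigr\},
\]
which is a well-formed intensional strategy since every listed step has source $a=Im(\sema{\alpha})$. Then I would verify $\sem\nzeta=\zeta$ by two inclusions. For $\zeta\subseteq\sem\nzeta$: if $\pi\in\zeta$, then for every step index $j$ the length-$(j{+}1)$ prefix of $\pi$ has the form $\sema{\alpha}\,(a_j,\phi_j,a_{j+1})$ with $\sema{\alpha}$ the length-$j$ prefix of $\pi$, and it is a prefix of $\pi\in\zeta$; hence step $j$ obeys $\nzeta$, so $\pi\in\sem\nzeta$. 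For $\sem\nzeta\subseteq\zeta$: let $\pi\in\sem\nzeta$; I claim every finite prefix of $\pi$ is a prefix of some derivation in $\zeta$. Indeed, for a prefix of length $k\geq 1$, membership $\pi\in\sem\nzeta$ gives $(a_{k-1},\phi_{k-1},a_k)\in\nzeta(\cx{\alpha}a_{k-1})$ with $\sema{\alpha}$ the length-$(k{-}1)$ prefix of $\pi$, and by the very definition of $\nzeta$ this says precisely that the length-$k$ prefix $\sema{\alpha}\,(a_{k-1},\phi_{k-1},a_k)$ of $\pi$ is a prefix of some derivation in $\zeta$. So $\pi$ is a limit point of $\zeta$, and since $\zeta$ is closed, $\pi\in\zeta$.

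There is no real computation involved; the points of care are the traced\-object/history bookkeeping, so that the claimed prefixes of $\pi$ exactly match the traces fed to $\nzeta$ in Definition~\ref{ext-of-int-mem}, and the observation that closedness of $\zeta$ is invoked \emph{exactly once}, at the end of the $\sem\nzeta\subseteq\zeta$ argument, to pass from ``all finite prefixes of $\pi$ lie below $\zeta$'' to ``$\pi\in\zeta$''. That single step is the crux of the result: it is the only place the hypothesis is used, and dropping it is precisely what Example~\ref{eventual} shows cannot be done. Throughout one may disregard the degenerate empty derivation $\Lambda_\Gamma$, which is excluded from abstract strategies by convention. Lemma~\ref{ints} is not needed for this argument, though it records the complementary closure of the realizable sets under arbitrary intersection.
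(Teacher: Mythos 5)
Your proof is correct, and it splits into one half that matches the paper and one half that genuinely departs from it. For necessity, you follow the paper's idea---locate a step of $\pi$ that disobeys $\nzeta$ and argue that every derivation sharing a suitable finite prefix must also disobey it---but you handle the one delicate point more robustly: you take the witnessing prefix $\pi_0$ to \emph{include} the offending step, so that every extension of $\pi_0$ necessarily repeats that step with the same history and hence fails; the paper instead cuts $\pi_0$ just before the offending step and then needs a side remark that only covers the case $\nzeta(\cx{\alpha}a_i)=\emptyset$ (if $\nzeta$ merely fails to contain the particular step taken by $\pi$, an extension of the shorter prefix could take a different, permitted step). For sufficiency the routes diverge: the paper writes the closed set $\zeta$ as an intersection of complements of basic open sets $B_{\pi_0}$ disjoint from $\zeta$, realizes each such complement by an intensional strategy, and invokes Lemma~\ref{ints} to intersect them all; you instead exhibit a single explicit strategy---permit a step exactly when the history extended by that step is still a prefix of some member of $\zeta$---and check the two inclusions directly, invoking closedness exactly once to pass from ``every finite prefix of $\pi$ is a prefix of something in $\zeta$'' to ``$\pi\in\zeta$''. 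Your argument is more self-contained (it does not need Lemma~\ref{ints}) and has the merit of displaying the generating strategy concretely, which is in the spirit of the constructive remarks made later in Section~\ref{se:express}; the paper's argument is more modular and makes the topological reading (basic opens, closure of realizable sets under intersection) explicit. Both establish Proposition~\ref{char}.
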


\begin{proof}
  { Let $\nzeta$ be an intensional  strategy.  To show that $\ezeta$ is
   closed, we must show that if $\pi$ is a derivation that is not in
   $\ezeta$, then there is a finite prefix $\pi_0$ of $\pi$ such
    that every extension $\pi'_0$ of $\pi_0$ fails to be in   $\ezeta$.
   Write $\pi$ as the sequence of steps $s_1, s_2,
    \dots$ where each $s_i$ is an element $(a_i, \phi_i, a_{i+1}) \in
      \Gamma$.  If $\pi$ is not in  $\ezeta$,   then for some $i$, the
    $i$th step $s_{i} = (a_i, \phi_i, a_{i+1}) \notin \ezeta(\cx{\alpha}a_{i})$
   where $\alpha = \left((a_j,\phi_j)\right)_{0\leq j < i}$.
   We can take $\pi_0$ to be  $\sema{\alpha}$, the derivation composed of the $(i-1)$
    first steps of $\pi$.
   Here we use the fact that if 
$\nzeta(\cx{\alpha}a_{i}) = \emptyset$ then $\nzeta(\cx{\alpha'}a') = \emptyset$ for 
all $\alpha'$ such that $\sema{\alpha'} = \pi' = \pi_0 \pi_1$ for some $\pi_1$.

    For the converse, suppose that $\zeta$ is a closed set of reductions.
    Then $\zeta$ is the intersection of the set of the complements of the
    basic open sets disjoint from $\zeta$.  By Lemma~\ref{ints}
    it suffices to show that
    the complement of any basic open set is defined by an intensional strategy.  So fix
    a finite $\pi_0$; we need to construct an intensional strategy determining the set of
    those $\pi'$ which do \emph{not} extend $\pi_0$.  Letting $\pi_0 = 
    s_1, s_2, \dots, s_n $ we may simply define the strategy $\zeta$
    to return the empty set on all $\pi'$ extending $\pi_0$ and return all
    possible next moves on all other inputs. 
  }
\end{proof}

\begin{example}
 \label{se:example}

Consider the following \ars $\mathcal{A}$,  modeling a simple
intersection with two traffic signals. 
There are two directions (say,
the north-south direction and the east-west direction).
One traffic signal controls the north-south  direction, another  controls
the east-west direction: each 
signal can be red or green.
With each direction is associated a queue of cars waiting to cross:  we
can model these as natural numbers.  (We can also bound the number of
cars allowed in the model in order to obtain a finite \ars; this does
not affect the observations made in this example.)

So an object of $\mathcal{A}$  is a quadruple $[q_1, l_1, q_2, l_2]$ where
$q_1$ and $q_2$ are natural numbers and 
$l_1$ and $l_2$ can each take the value 0 (for ``red'') or 1 (for ``green'').

The steps in $\mathcal{A}$ correspond to the fact that at each time unit
\begin{itemize}
\item  cars can arrive at queue 1 or queue 2

\item some signal can change its value 

\item a car may cross the intersection if its signal is green
\end{itemize}
This leads to the following set of labels:
$\mathcal{L} = \{
\carone, \cartwo, \signalone, \signaltwo, \crossone, \crosstwo
\}$.
The steps of $\mathcal{A}$ may be defined by the following schematic reduction steps

\begin{align*}
  &  [q_1, l_1, q_2, l_2 ] \stackrel{\carone}{\longrightarrow} [(q_1+1),  l_1, q_2,   l_2] \qquad
  \;[q_1, l_1, q_2, l_2] \stackrel{\cartwo}{\longrightarrow} [q_1,  l_1,  (q_2+1),   l_2] \\ 
  & [q_1, l_1, q_2, l_2] \stackrel{\signalone}{\longrightarrow} [q_1,  (1- l_1),  q_2,  l_2] \qquad
  [q_1, l_1, q_2, l_2] \stackrel{\signaltwo}{\longrightarrow} [q_1,  l_1,  q_2,  (1- l_2)] \\ 
  & [q_1, 1, q_2, l_2] \stackrel{\crossone}{\longrightarrow} [(q_1 - 1),  1,  q_2,   l_2] \qquad
  \; [q_1, l_1, q_2, 1] \stackrel{\crosstwo}{\longrightarrow} [q_1,  l_1,  (q_2 -1) ,  1] \\ 
\end{align*}
Note that $\mathcal{A}$ %
models the fact that cars may arrive at and cross the intersection in
arbitrary patterns,  and it reflects the constraint that the cars obey the traffic signals.
But $\mathcal{A}$ does not, as an \ars, attempt to model an intelligent protocol for
scheduling the traffic signals: it is an arena for developing and analyzing such
a protocol.   

The system admits some states that are intuitively undesirable, for example any
state where both signals are green.  There are also some \emph{derivations} that are
undesirable, for example a derivation in which some car is left waiting at an intersection
forever, such as the ``unfair''
\begin{equation}\label{derone}
 [1, 0, 1, 1] \stackrel{\crosstwo}{\longrightarrow} 
 [1, 0, 0, 1] \stackrel{\cartwo}{\longrightarrow} 
 [1, 0, 1, 1] \stackrel{\crosstwo}{\longrightarrow} 
 [1, 0, 0, 1] \dots
\end{equation}
 An algorithm to manage the traffic signals is precisely an intensional
strategy for the \ars $\mathcal{A}$, and   the extension of such a strategy is the set of
behaviors of the system that the strategy enforces.

Here are some results about strategies in this system;    the second and
fourth items are results that follow easily from the characterization theorem.
\begin{itemize}
\item There are intensional strategies that ensure that the signals are
  never both green simultaneously (this is easy).

\item More interestingly, there is an intensional strategy
  $\nzeta$ such that $\sem{\nzeta}$ is precisely the set of those
  derivations such that the signals are never both green simultaneously.
  This is a highly non-deterministic strategy, that permits \emph{any}
  behavior as long as it does not permit simultaneous greens.

\item There are intensional strategies to ensure that any car arriving
  at an intersection is eventually allowed through.

\item But there is \emph{no} intensional strategy $\nzeta$ whose extension 
  $\sem{\nzeta}$ is precisely the
  set of all those paths in which any car arriving at an
  intersection is eventually allowed through. 
\end{itemize}
The second fact above follows from the observation that the set of
derivations in which the signals are never both green simultaneously is
a closed set (this is an easy consequence of Definition~\ref{closed}).
The fourth fact above follows from the observation that the set of
``fair'' derivations, in which no car is forever denied access to the
intersection, is not a closed set.  To see this, note that the unfair
derivation above is a limit point of the set of fair derivations, since
every finite prefix of this derivation can be extended to one in which
the car waiting (and all subsequent cars) crosses the intersection.
Since the derivation itself is not fair, we see that the set of fair
derivations does not contain all of its limit points, and so is not
closed.
 \end{example}

\subsection{Closure properties for intensional strategies}
\label{sec:closure-under-unions}

In light of the importance of designing languages for expressing complex
strategies, a natural question to ask is: what are the closure
properties enjoyed by (the extensions of) intensional strategies?
We observed in Lemma~\ref{ints} that
this class is closed under intersection.  Indeed,
 by taking the pointwise intersection of a family 
$\{\nzeta_i \mid i \in I \}$ of intensional  strategies, 
the extensional strategy generated is the intersection of the extensions
of the $\nzeta_i$.
However this pointwise construction fails for arbitrary union. Even when
we restrict to finite unions the situation is subtle:
if $\nzeta_1$ and $\nzeta_2$ are intensional strategies 
generating $\ext{\nzeta_1}$ and $\ext{\nzeta_2}$ respectively, 
and  if we write $\nzeta_1 \cup \nzeta_2$ for the intensional strategy that is
the pointwise union of 
 $\nzeta_1$ and $\nzeta_2$, then
$\nzeta_1 \cup \nzeta_2$ will \emph{not,}  in
general, generate 
 $\ext{\nzeta_1} \cup \ext{\nzeta_2}$,   as the
next example demonstrates.

\begin{example}
 Given the \ars with objects $\{a, b_1, b_2\}$ and
  reduction steps 
  $(a, \phi_1, b_1)$, $(a, \phi_2, b_2)$, $(b_1, \beta_1, a)$,
  $(b_2, \beta_2, a)$,
  let $\nzeta_1$ be the (memoryless) intensional strategy %
  \[
  a \mapsto \{(a, \phi_1, b_1)\}, \quad b_1 \mapsto \{(b_1, \beta_1,  a) \}
  \]
  and  let $\nzeta_2$ be the intensional strategy %
  \[
  a \mapsto \{(a, \phi_2, b_2)\}, \quad b_2 \mapsto \{ (b_2, \beta_2, a) .
    \]
Clearly $\ext{\nzeta_1}$ 
is the set of derivations that loop between $a$
and $b_1$, and similarly for  $\ext{\nzeta_2}$.
If we now construct the intensional strategy $\nzeta_1 \cup
    \nzeta_2$ by taking the pointwise union of $\nzeta_1$ and
    $\nzeta_2$; thus 
    \[
    \nzeta_1 \cup  \nzeta_2 \quad = \quad 
    a \mapsto \{(a, \phi_1, b_1), (a, \phi_2, b_2) \}, b_1 \mapsto \{(b_1, \beta_1,
    a)\}, b_2 \mapsto \{ (b_2, \beta_2, a) 
    \]
then clearly $\ext{\nzeta_1 \cup  \nzeta_2}$
is the set of all derivations.
Thus
\[
\ext{   \nzeta_1 \cup  \nzeta_2  } \neq  \ext{ \nzeta_1} \cup  \ext{ \nzeta_2 }
\]
That is, the pointwise union of intensional strategies does not give
rise to the union of the corresponding extensions.
\end{example}

Nevertheless,  $\sem{\nzeta_1} \cup \sem{\nzeta_2}$ is indeed
generated by an intensional strategy (with memory): at the first step at object $a$
we non-deterministically move to $b_1$ or to $b_2$ and at subsequent
steps we always make the same choice.
It is not an accident that we can generate $\sem{\nzeta_1} \cup
\sem{\nzeta_2}$ intensionally, as we see next.

\begin{proposition}
 The class of abstract strategies that are extensions of intensional strategies is
 closed under arbitrary intersection and finite unions; it is not
  closed under complement.
\end{proposition}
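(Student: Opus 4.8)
The plan is to reduce the whole statement to Proposition~\ref{char}, which already identifies the abstract strategies of the form $\ezeta$ with exactly the topologically \emph{closed} sets of non-empty derivations. Once that reduction is in place, all that remains is the elementary observation that, over a fixed \ars, the closed sets are stable under arbitrary intersection and under finite union, but not under complementation (where complement is understood inside the set of non-empty derivations of that \ars).

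For \emph{arbitrary intersection} I would dispatch the countable case immediately by citing Lemma~\ref{ints}: the pointwise intersection of a countable family of intensional strategies $\nzeta_i$ is an intensional strategy whose extension is $\bigcap_i \ext{\nzeta_i}$. For a genuinely arbitrary (possibly uncountable) family I would instead argue topologically from Definition~\ref{closed}: if $\pi$ lies outside $\bigcap_i \zeta_i$ then it lies outside some $\zeta_j$, so the closedness of $\zeta_j$ supplies a finite prefix $\pi_0$ of $\pi$ none of whose extensions meets $\zeta_j$, hence none of whose extensions meets $\bigcap_i \zeta_i$; an appeal to Proposition~\ref{char} then turns this closed set back into an intensional strategy.

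For \emph{finite union} it suffices to treat two closed sets $\zeta_1, \zeta_2$. Given $\pi \notin \zeta_1 \cup \zeta_2$, Definition~\ref{closed} provides a finite prefix $\pi_0^1$ of $\pi$ no extension of which lies in $\zeta_1$, and likewise a finite prefix $\pi_0^2$ for $\zeta_2$. Since the finite prefixes of a fixed derivation are linearly ordered by the prefix relation, one of $\pi_0^1, \pi_0^2$ is a prefix of the other; let $\pi_0$ be the longer one. Every extension of $\pi_0$ extends both $\pi_0^1$ and $\pi_0^2$, hence avoids both $\zeta_1$ and $\zeta_2$, so $\pi_0$ witnesses that $\zeta_1 \cup \zeta_2$ is closed; Proposition~\ref{char} again finishes the argument. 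This also explains, in hindsight, the informal remark just before the proposition: $\sem{\nzeta_1} \cup \sem{\nzeta_2}$ is always the extension of some intensional strategy, even though the pointwise union of $\nzeta_1$ and $\nzeta_2$ need not be a suitable one.

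Finally, for the failure of closure under \emph{complement} I would recycle Example~\ref{eventual}: take its \ars $\mathcal{A}$ with objects $\{a,b\}$ and steps $a \to a$, $a \to b$, and let $\zeta = \{(a\to a)^n \mid n \ge 1\} \cup \{(a\to a)^\omega\}$ be the set of non-empty derivations that never fire $a \to b$. I would check that $\zeta$ is closed --- every finite prefix of a member of $\zeta$ is some $(a\to a)^n$, so every limit point of $\zeta$ is some $(a\to a)^n$ or $(a\to a)^\omega$, all of which already belong to $\zeta$ --- and therefore, by Proposition~\ref{char}, it is the extension of an intensional strategy. Its complement among the non-empty derivations of $\mathcal{A}$ is precisely $\{(a\to a)^n (a\to b) \mid n \ge 0\}$, the ``eventually fire $a \to b$'' set, which Example~\ref{eventual} already shows is not the extension of any intensional strategy (equivalently, it fails to be closed since $(a\to a)^\omega$ is one of its limit points and is not in it). I do not expect any real obstacle: modulo Proposition~\ref{char} this is just the elementary theory of closed sets. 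The only place that genuinely needs care is the bookkeeping around the word ``complement'' --- it must be taken relative to the non-empty derivations of a fixed \ars --- so I would be explicit about that ambient universe when stating the counterexample.
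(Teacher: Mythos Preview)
Your proposal is correct and follows essentially the same route as the paper: the paper's proof is the one-liner ``These are immediate consequences of Proposition~\ref{char} and basic facts about topological spaces,'' and your write-up simply unpacks those basic facts (stability of closed sets under arbitrary intersection and finite union, plus the explicit counterexample for complement drawn from Example~\ref{eventual}). The only added value you provide beyond the paper is spelling out the prefix argument for finite unions and the limit-point check for the counterexample, which is fine but not a genuinely different approach.
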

\begin{proof}
  {
    These are immediate consequences of Proposition~\ref{char} and basic facts
    about topological spaces.
 }
\end{proof}

The fact that unions of extensions of intensional
strategies are intensionally generated--even though the naive
``pointwise union'' construction fails--
is a nice application of the
topological perspective.

\subsection{Other classes of abstract strategies that can be intensionally described}

When restricting to abstract strategies consisting of a
  potentially infinite number of finite derivations, the existence of
  a corresponding intensional strategy depends on some classical
  properties of the original strategy. The proof of  existence of
  a corresponding intensional strategy under appropriate
  assumptions explicits the way such a strategy can be built.

\begin{proposition}
Given an abstract strategy $\zeta$ over $\aos$ consisting only
   of finite derivations, there is a memoryless
    intensional strategy $\nzeta$ over $\ca A$ such that  $\szeta = \zeta$ iff $\zeta$ is
    factor-closed and closed under composition.
\end{proposition}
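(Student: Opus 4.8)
The statement is an ``iff'', so I would prove the two directions separately, the forward direction being essentially routine and the backward direction carrying the real content.

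For the forward direction, suppose $\nzeta$ is a memoryless intensional strategy with $\szeta = \zeta$ (note that since $\zeta$ consists only of finite derivations, $\ezeta$ could in principle contain infinite derivations, but the equation $\szeta = \zeta$ is the hypothesis we are given). I would show $\zeta$ is factor-closed and closed under composition directly from Definition~\ref{def:ext-of-int} (or rather its finite-support counterpart, Proposition~\ref{def:pre-ext-of-int}). For factor-closure: if $\pi = ((a_i,\phi_i,a_{i+1}))_{i\in\Im} \in \szeta$, then $(a_j,\phi_j,a_{j+1})\in\nzeta(a_j)$ for every $j\in\Im$; any factor picks out a subinterval $\Im'\subseteq\Im$, and the membership conditions for the factor are a subset of those already known to hold, so the factor is in $\szeta = \zeta$. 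For closure under composition: if $\pi,\pi'\in\zeta$ are composable, then $Im(\pi) = Dom(\pi')$, and since the condition defining membership in $\szeta$ is ``every step $s$ at object $a$ satisfies $s\in\nzeta(a)$'', the concatenation $\pi\pi'$ satisfies this condition step-by-step, hence lies in $\zeta$. (Alternatively one invokes the sub-ARS $\ca B$ observation from the text: $\szeta = \derivations{\ca B}$, and the set of all finite non-empty derivations of an ARS is trivially factor-closed and closed under composition.)

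For the backward direction --- the substantive one --- suppose $\zeta$ is factor-closed and closed under composition and consists only of finite derivations. I would define $\nzeta$ by collecting, for each object $a$, all length-one derivations that occur as a factor of some derivation in $\zeta$ starting appropriately; concretely, set
\[
\nzeta(a) = \{\, (a,\phi,b)\in\reductions \;\mid\; (a,\phi,b) \text{ is a factor of some } \pi\in\zeta \,\}
\]
on the domain $Dom(\nzeta) = Dom(\zeta)$ (and undefined elsewhere). The key claim is then $\szeta = \zeta$. The inclusion $\zeta \subseteq \szeta$ follows because any $\pi = ((a_i,\phi_i,a_{i+1}))_{i\in\Im}\in\zeta$ has each of its steps as a factor of $\pi$ itself, hence in $\nzeta(a_i)$; combined with $\pi$ being finite, Proposition~\ref{def:pre-ext-of-int} (read through the inductive characterization) puts $\pi$ in $\szeta$. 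The reverse inclusion $\szeta\subseteq\zeta$ is where factor-closure and closure under composition are both needed: I would argue by induction on the length of a derivation $\pi\in\szeta$, using the inductive clauses of Proposition~\ref{def:pre-ext-of-int}. In the base case $\pi\in\nzeta(a)$ for some $a$, so $\pi$ is a factor of some $\sigma\in\zeta$, hence $\pi\in\zeta$ by factor-closure. In the inductive step $\pi = \pi_0\pi'$ with $\pi_0\in\szeta$ (so $\pi_0\in\zeta$ by induction hypothesis) and $\pi'\in\nzeta(Im(\pi_0))$; then $\pi'$ is a factor of some $\tau\in\zeta$, hence $\pi'\in\zeta$ by factor-closure, and since $\pi_0$ and $\pi'$ are composable (as $Dom(\pi') = Im(\pi_0)$), closure under composition gives $\pi = \pi_0\pi'\in\zeta$.

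\textbf{Main obstacle.} The delicate point is making sure the definition of $\nzeta(a)$ is neither too big nor too small. If we allowed arbitrary steps of $\reductions$ out of $a$ we would overshoot $\zeta$; restricting to steps that appear as a factor of some member of $\zeta$ is exactly what lets factor-closure do its job in the $\szeta\subseteq\zeta$ direction. Conversely, one must check this set is rich enough that every $\pi\in\zeta$ is reconstructed --- which is immediate since $\pi$'s own steps are factors of $\pi$. The one subtlety to flag is that we need $\szeta$, the \emph{finite} support, to equal $\zeta$: because $\zeta$ has only finite derivations, and because the inductive generation in Proposition~\ref{def:pre-ext-of-int} produces exactly the finite derivations of the extension, this is precisely the right object to compare against, and no infinite-derivation pathology (of the kind in Example~\ref{eventual}) can arise. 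I would also remark, as the text's sub-ARS comment already hints, that this $\nzeta$ induces the sub-ARS $\ca B$ whose edge set is $\bigcup_a \nzeta(a)$, and the argument is really the observation that a factor-closed, composition-closed set of finite derivations is exactly the set of finite non-empty derivations of a subgraph.
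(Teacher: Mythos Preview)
Your proposal is correct and follows essentially the same approach as the paper, which dispatches the forward direction as ``obvious'' and for the converse defines $\nzeta(a) = \{a \flechup{\phi} b \mid a \flechup{\phi} b \in \zeta \cap \reductions\}$. Your definition of $\nzeta(a)$ via ``steps that occur as a factor of some $\pi\in\zeta$'' coincides with the paper's ``length-one derivations that are themselves in $\zeta$'' precisely because $\zeta$ is factor-closed, so the two constructions are the same; you simply supply the verification details (the induction on length for $\szeta\subseteq\zeta$) that the paper omits.
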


\begin{proof}
  { Obviously, the abstract strategy $\szeta$ built from an
    intensional strategy $\nzeta$ is factor-closed and closed by
    composition.  Conversely, if $\zeta$ is factor-closed and closed
    by composition, $\nzeta$ can be defined as follows: for all
    object $a$, %
    $\nzeta(a) = \{a \flechup{\phi} b \mid a
      \flechup{\phi} b \in \zeta \cap \reductions . \} $ }
\end{proof}

The existence of memoryless intensional strategy obviously
  implies the existence of an intensional strategy  (with memory). The class
  of abstract strategies that satisfy the above conditions is already
  quite important, especially when considering term rewriting
  strategies but, as we have seen in
  Section~\ref{se:IntensionalStrategies}, there are strategies that do
  not fit these constraints. 

For example, if we consider again Example~\ref{se:example} over a finite interval of time, we can find	 
   a memoryless intensional strategy that corresponds to the set of derivations 	 
   such that the signals are never both green simultaneously. On the other hand 
   there is an intensional strategy, but not a memoryless one,
   for the set of derivations such that a car	 
  waits at most $n$ turns before crossing the intersection.   This
  follows from the following proposition.

\begin{proposition}
Given an  abstract strategy $\zeta$ over $\aos$  consisting only
   of finite derivations, there is an
    intensional strategy $\nzeta$ over $\ca A$ such that  $\szeta = \zeta$ iff $\zeta$ is
    prefix-closed.
\end{proposition}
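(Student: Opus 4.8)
The plan is to prove the two implications separately. \textbf{Necessity} is short: if $\nzeta$ is an intensional strategy with $\szeta=\zeta$, then, exactly as in the memoryless case, the extension $\ezeta$ of $\nzeta$ is closed under prefixes — immediate from Definition~\ref{ext-of-int-mem}, since each step of a prefix of $\pi$ is again a step of $\pi$ facing the same history — whence its finite support $\szeta$ is prefix-closed as well, because a prefix of a finite derivation is finite and still lies in $\ezeta$; so $\zeta=\szeta$ is prefix-closed. \textbf{Sufficiency} is the substance: I would exhibit an explicit memory-based strategy, the natural ``replay the history'' one, and verify $\szeta=\zeta$ using the inductive description of $\szeta$ from Proposition~\ref{def:pre-ext-of-int-with-mem}.

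\textbf{The construction.} Assume $\zeta$ consists only of finite derivations and is prefix-closed. For every traced object $\cx{\alpha}a\in\cobj$ define
\[
\nzeta(\cx{\alpha}a)\;=\;\{\,(a,\phi,b)\in\Gamma\;\mid\;\sema{\alpha}\,(a,\phi,b)\in\zeta\,\},
\]
with the convention $\sema{\vide}\,\pi=\pi$. Since $\cx{\alpha}a$ is compatible with $\ca A$, $\sema{\alpha}$ is a well-formed finite derivation with target $a$ (or the empty derivation when $\alpha=\vide$), so $\sema{\alpha}(a,\phi,b)$ is a derivation and every element of $\nzeta(\cx{\alpha}a)$ has source $a$; hence $\nzeta$ is a legitimate intensional strategy, defined on all of $\cobj$. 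Note that prefix-closure of $\zeta$ forces $\nzeta(\cx{\alpha}a)=\emptyset$ whenever $\sema{\alpha}\notin\zeta$.

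\textbf{Checking $\szeta=\zeta$.} For $\zeta\subseteq\szeta$, induct on $|\pi|$ for $\pi\in\zeta$. If $|\pi|=1$, write $\pi=(a,\phi,b)$; then $\pi\in\nzeta(\cx\vide a)$ since $\sema{\vide}\pi=\pi\in\zeta$, so $\pi\in\szeta$ by the first clause of Proposition~\ref{def:pre-ext-of-int-with-mem}. If $|\pi|\geq 2$, decompose $\pi=\pi_0\,s$ with $s$ the last step and $\pi_0$ the non-empty prefix of length $|\pi|-1$; by prefix-closure $\pi_0\in\zeta$, hence $\pi_0\in\szeta$ by the induction hypothesis. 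Letting $\alpha$ be the trace of $\pi_0$ (so $\sema{\alpha}=\pi_0$ and $Im(\pi_0)=Dom(s)$), we get $s\in\nzeta(\cx{\alpha}Im(\pi_0))$ because $\sema{\alpha}s=\pi\in\zeta$, and the second clause gives $\pi=\pi_0 s\in\szeta$. For $\szeta\subseteq\zeta$, induct on the generation of $\szeta$: any $\pi\in\nzeta(\cx\vide a)$ satisfies $\pi=\sema{\vide}\pi\in\zeta$; and if $\pi=\sema{\alpha}\in\zeta$ (induction hypothesis) and $\pi'\in\nzeta(\cx{\alpha}Im(\pi))$, then $\pi\pi'=\sema{\alpha}\pi'\in\zeta$ by definition of $\nzeta$. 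Hence $\szeta=\zeta$.

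\textbf{Main obstacle.} There is essentially none: the construction is forced once one decides to use the trace at all. The single delicate point is the one place where prefix-closure is genuinely used — recovering the length-$(|\pi|-1)$ prefix of $\pi\in\zeta$ as an element of $\zeta$ in the $\zeta\subseteq\szeta$ induction — which simultaneously explains why the hypothesis cannot be weakened and identifies this as the finite-derivation counterpart of the obstruction in Example~\ref{eventual}, where comparing $\zeta$ with the full extension $\ezeta$ (rather than with its finite support $\szeta$) is precisely what fails. I would also take care to state $\nzeta$ on traced objects whose underlying derivation is not in $\zeta$, which is harmless here since the definition automatically returns $\emptyset$ there.
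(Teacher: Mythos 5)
Your proof is correct and follows essentially the same route as the paper: the same ``replay the history'' construction $\nzeta(\cx{\alpha}a)=\{\pi'\in\reductions \mid \sema{\alpha}\pi'\in\zeta\}$ for sufficiency, and the same observation that extensions of intensional strategies are prefix-closed for necessity. You merely spell out the verification $\szeta=\zeta$ (which the paper asserts without detail) and handle the value of $\nzeta$ on traced objects outside $\zeta$ explicitly, both of which are harmless refinements.
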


\begin{proof}{
    Let $\nzeta$ be an intensional strategy over $\aos$ and
    $\pi\in\szeta$; two cases are possible:
    \begin{itemize}
    \item $|\,\pi\,| \leq 1$, then $\pi$ has no prefix.
    \item $|\,\pi\,| > 1$, then there exists $\alpha$ \textit{s.t.}
      $\pi_{pref}=\sema{\alpha} \in \szeta $ and
      $\pi' \in \nzeta \left(\cx{\alpha}Im(\pi)\right),~\pi = \pi_{pref}
      \pi'$. By applying the same reasoning over $\pi_{pref}$, we obtain
      that all prefixes of $\pi$ are in $\szeta$.
    \end{itemize} Conversely, let us consider a prefix-closed abstract
    strategy $\zeta$. The intensional strategy $\nzeta$ defined as
    follows:
    \begin{itemize}
    \item $\forall a\in\ca O, \nzeta(\cx{\vide}a) = \{\pi\in\zeta \cap
      \reductions \mid Dom(\pi)=a\}$,
    \item $\forall \pi\in\zeta$, and $\cx{\alpha} a \in \cobj$
      \textit{s.t.} $\sema{\alpha} = \pi$, $\nzeta(\cx\alpha a) = \{\pi'
      \in \reductions \mid \pi \pi' \in \zeta\}$
    \end{itemize} is such that $\szeta = \zeta$.
  }
\end{proof}

\section{Logical intensional strategies}
\label{sec:log-int-strat}

Instead of defining an intensional strategy $\nzeta$ by a function, we
can consider using a logical approach and  associating to $\nzeta$ a
\emph{characteristic property} denoted by ${\ca P}_{\nzeta}$ such
that:
\begin{center}
${\ca P}_{\nzeta}(\cx{\alpha} a, \phi)$ is true iff 
$\exists b$ such that $(a,\phi,b)\in \nzeta(\cx\alpha a)$
\end{center}
Thus, $\szeta$  is the following prefix closed abstract strategy:
$$
\left\{
\sema{\alpha\odot(a,\phi)} \mid 
{\ca P}_{\nzeta}(\cx{\alpha} a, \phi) \textit{~and~}
(\alpha = \vide \vee  \sema\alpha\in\szeta) \textit{~and~}
\exists b, a \flechup{\phi} b \in \reductions 
\right\}
$$

\begin{example}
  Let us show how some of the previous examples and some variants can be
  easily expressed with a characteristic property.

  \begin{itemize}
  \item For the $Universal$ strategy $\nzeta_u$ over an ARS  ${\aos}$, ${\cal P}_{u}(\cx{\alpha} a, \phi) =
    \top$, where $\top$ denotes the true Boolean value.

  \item For the $Fail$ strategy $\nzeta_f$ over an ARS  ${\aos}$, ${\cal P}_{f}(\cx{\alpha} a,\phi) =
    \bot$, where $\bot$ denotes the false Boolean value.

  \item For the $Greatmost$ strategy $\nzeta_{gm}$ over an ARS ${\aos}$
    with an order $<$ on the labels, ${\cal P}_{gm}(\cx{\alpha} a,\phi) =
    \forall (\phi',b),a \flechup{\phi'} b \in \Gamma \implique \phi \not< \phi'$.

   \item For the strategy $\nzeta_{ltk}$ that selects the set of
     derivations of length at most $k$, 
     ${\cal P}_{ltk}(\cx\alpha a,\phi) =  |\alpha| < k$.

   \item For the strategy $\nzeta_{R_1;R_2}$ that alternates
     reductions with labels from $R_1\in \ca L$ with reductions with labels from  $R_2\in \ca L$,
     $${\cal P}_{R_1;R_2}(\cx\alpha a,\phi) =  \left\{
     \begin{array}{lll}
       &\alpha = \vide ~~\implique~~ & \left( \phi \in R_1 \vee \phi \in R_2 \right)\\
        \wedge&  \alpha = \alpha' \odot (u,\phi')  ~~\implique~~ & \big((\phi' \in R_1 \implique \phi \in R_2)
           \vee 	   (\phi' \in R_2 \implique \phi \in R_1)\big)
 	   \end{array}\right\}
     $$
   \end{itemize}
\end{example}

Indeed, using a logical property instead of a fonction is rather a matter of choice or can be related 
to the properties of strategies we want to study, but this does not bring more expressivity.
As previously said, intensional strategies generate only closed sets of derivations and thus 
always contain all prefixes of derivations.  This prevents us from computing extensional strategies
that look straightforward like the one in the next example.

\begin{example}\label{ex:logical}
  We consider again the \ars\ ${\ca A}_{lc}$ and a strategy reduced to
  only one derivation $\zeta = \{ a \flechup{\phi_1} b
  \flechup{\phi_3} a \flechup{\phi_2} c \}$.  $\zeta$ cannot be
  computed by an intensional strategy $\nzeta$ built as before since
  its extension would contain  too many derivations, namely all prefixes of the
  derivation in $\zeta$.
\end{example}

In order to avoid this constraint, we
characterize \emph{accepted derivations} belonging to $\ezeta$ by
defining a property over $\cobj$ called \emph{accepting states} and denoted by $\ca F_\nzeta$.
This leads to the following extended definition of an intensional strategy:

\begin{definition}
  We call \emph{logical intensional strategy} over $\aos$ any pair
$(\nzeta, \ca F_\nzeta)$ where $\nzeta$ is an intensional strategy with memory 
and $\ca F_\nzeta \subseteq \cobj$.
\end{definition}

A logical intensional strategy $(\nzeta, \ca F_\nzeta)$ generates the abstract strategy
$\{ \sema{\alpha} \in \ezeta ~|~ \cx\alpha Im(\alpha) \in
{\ca  F}_\nzeta\}$.

In practice, it may be useful to describe the set ${\ca F}_\nzeta$ by its characteristic function $F_{\nzeta}$ and to test 
whether $F_{\nzeta}(\cx\alpha Im(\alpha))= true$.

Let us illustrate on  simple examples the expressive power gained in this extended definition.

\begin{example}
Coming back to Example~\ref{ex:logical}, we can now characterize the only derivation of interest by 
simply stating that ${\ca F}_{\nzeta}$ contains
only the compatible traced-object $\cx{(a,\phi_1)\odot(b,\phi_3)\odot(a,\phi_2)} c$.

In order to define a strategy that selects derivations of length
greater than $k$, we cannot  proceed as for defining $\nzeta_{ltk}$
since the strategy is not prefix closed, but we can characterize
accepting states, namely those reached in more than $k$ steps:
$\ca F_{gtk} =  \{\cx{\alpha}a ~|~ |\alpha| \geq k\}$. 
The situation is similar when one wants  to define  a strategy that selects derivations of length
exactly $k$.

As a last example, we can now formalize the strategy of
    Example~\ref{eventual} with the following accepting states $\ca
    F_{\nzeta}= \{\cx{\alpha}b ~|~ \exists n \in \mathbb N , \sema{\alpha}: a \flechup{n} a
    \flechup{} b\}$.  
\end{example}

\section{Conclusion}
\label{Conclusion}

We have proposed and discussed in this paper different definitions of
strategies stressing different aspects: clearly the notion of abstract
strategy is appropriate to explore semantic properties, while
intensional strategies are more adequate for operational purposes.  We
have tried to show how these two views do not exclude each other but
rather may be very complementary.

In order to express interesting strategies in an operational way, we
have introduced traced objects that memorize their history.  There is an
interesting analogy with languages which is worth exploring: derivations
(histories) are words built on 3-tuples $(a,\phi,b)$; the strategy (seen
as a set of derivations) is the language (set of words) to recognize;
the characteristic property of the strategy is the way to decide whether
a word belongs to the language.  Based on this analogy, it would be
interesting to see if it is possible to characterize classes of
recognizable and computable strategies.

Another direction we want to explore is the definition of intensional
strategies that, at a given step, can look forward in the following
intended derivation steps. Formalizing such looking-forward intensional
strategies is motivated for instance by looking-ahead mechanisms in
constraint solving.

Finally, we have only sketched the definition of intensional strategies 
with accepting conditions and this approach needs further work.

The domain of strategic reductions is yet largely unexplored and
important questions remain.  Let us mention further topics that have not
been addressed in this paper but we think interesting to explore.
\begin{itemize}

\item
An important topic that requires yet further exploration is the design of a
\emph{strategy language} whose purpose is to give syntactic means to describe strategies.
Actually, several systems based on rewriting already provide a strategy language, for instance
\elan~\cite{KirchnerKV-MIT95,BKKR-IJFCS-2001}, \stratego~\cite{Visser-Stratego-RTA-2001},
\strafunski~\cite{strafunski},
\tom~\cite{Tom-Manual-2.4} or more recently \maude~\cite{MartiOlietMeseguerVerdejo04}.
It is interesting to identify common  constructs provided in these different languages, 
and to try to classify them according to their use either to explore the structure of objects (here terms)
or to build derivations or sets of derivations.
\begin{itemize}
\item
Basic constructions are given by rewrite rules whose application corresponds to an elementary 
reduction step. Identity and failure are also present as elementary constructions.
Their semantics is given by their abstract strategy definitions.
\item
Due to the tree structure of terms, traversal strategies that give access to sub-terms
are based on two constructions $All$ and $One$ that consider immediate sub-terms 
of a given node: 
on a term $t$, $All(s)$ applies the strategy $s$ on all immediate sub-terms, while 
$One(s)$ applies the strategy $s$ on the first immediate sub-term where $s$ 
does not fail.
\item
Operations to build derivations are sequential composition $Sequence$ and choice,
that may be deterministic ($Choice$) or not ($ND-Choice$).
Another construction, $Try$, which gives an unfailing choice, is very useful and can be 
just derived  from the previous ones.
\item
With a functional view of strategies, it is natural to define recursive strategies and to define a fixpoint 
operator. This is the way to perform iteration and for example to construct the $Repeat$ operator.
\item 
Traversal strategies also use the fixpoint operator to program 
different ways to go through the term structure, as in $Innermost$ or $Outermost$ strategies.
Such traversals are typical intensional strategies, as described in this paper.
\end{itemize}
These constructions are devoted to rewriting on terms or term graphs. Indeed, extending the language to graph rewriting 
raises new challenges, such as graph traversal.
Other constructions could be interesting in more general contexts than term rewriting, especially
parallel application of strategies with indeed non-interference.

\item
Proving properties of strategies and strategic reductions has already been explored 
in the case of specific strategies. Let us mention in particular the following approaches
for specific properties:
confluence, weak and strong termination, completeness of strategic rewriting have been addressed
in~\cite{GK-TOCL09,GK-JPJ2007,GK-C-red-2006,FGK-WRLA-2002}  for several traversal strategies 
using a schematization of derivation trees
and an inductive proof  argument. Another approach is the  dependency pairs technique which has been 
adapted to prove termination of rewriting under innermost\cite{Arts-Giesl-TCS-2000} or lazy strategies\cite{GieslSST-Haskell-RTA-2006}.
Other works such as~\cite{FGK-PPDP-2003,ZantemaR-WRS08} have considered strategies 
transformation to equivalent rewrite systems that are preserving properties like termination.
However other properties, such as fairness or  loop-freeness,  have been much less studied.
In general, we may expect that the logical characterization of intensional strategies could 
help to prove derivation properties in the context of strategic
derivations.

\item We have distinguished between arbitrary intensional strategies and
  \emph{memoryless} strategies.  In the well-studied domain of games on
  finite graphs--so important in verification research---there is an
  important middle ground between these: the class of strategies
  requiring a fixed finite amount of memory.  Such strategies can be
  computed by a finite-state machine \cite{DziembowskiJW97}.  The
  question of finite-memory strategies over abstract reductions systems
  is subtle, essentially due to the fact that an \ars corresponds to a
  (solitaire) game on a typically \emph{infinite} arena.  Note that even
  a conceptually simple term-rewriting strategy such as
  parallel-outermost cannot be said to require a fixed finite amount of
  memory, because terms can have an unbounded number of
  parallel-outermost redexes (see Example~\ref{with-mem-examples}).  A
  careful treatment of the proper analogue of \emph{finite-memory}
  strategies over general abstract reduction systems is an interesting
  topic for future work.
\end{itemize}

\bibliographystyle{eptcs}

\end{document}